\documentclass[journal,10pt]{IEEEtran} 
\makeatletter
\def\endthebibliography{%
	\def\@noitemerr{\@latex@warning{Empty `thebibliography' environment}}%
	\endlist
}
\makeatother

\usepackage{cite}

\usepackage[pdftex]{graphicx}

\usepackage[caption=false,font=footnotesize]{subfig}

\usepackage{amsmath}
\usepackage{mathtools, cuted}
\usepackage{pifont}
\usepackage{filecontents}
\usepackage{amssymb}
\usepackage{color}
\usepackage{xcolor}

\usepackage{epsfig}

\usepackage{url}

\usepackage{algpseudocode}
\usepackage{algorithm, tabularx}

\usepackage{lettrine}

\usepackage{lipsum}

\usepackage{siunitx}

\usepackage{soul,color}

\usepackage{mathrsfs}

\usepackage{array}

\usepackage[inline]{enumitem}

\usepackage{breqn}

\usepackage{multirow}
\usepackage{array}
\newcolumntype{L}[1]{>{\raggedright\let\newline\\\arraybackslash\hspace{0pt}}m{#1}}
\newcolumntype{C}[1]{>{\centering\let\newline\\\arraybackslash\hspace{0pt}}m{#1}}
\newcolumntype{R}[1]{>{\raggedleft\let\newline\\\arraybackslash\hspace{0pt}}m{#1}}

\usepackage{amsthm}
\newtheorem{theorem}{Theorem}

\newlength{\maxwidth}
\newcommand{\algalign}[2]
{\makebox[\maxwidth][r]{$#1{}$}${}#2$}

\makeatletter
\newcommand{\multiline}[1]{%
	\begin{tabularx}{\dimexpr\linewidth-\ALG@thistlm}[t]{@{}X@{}}
		#1
	\end{tabularx}
}
\makeatother

\DeclareMathOperator*{\argmax}{\arg\max}
\theoremstyle{remark}

\newtheorem{remark}{Remark}

\algdef{SE}[SUBALG]{Indent}{EndIndent}{}{\algorithmicend\ }%
\algtext*{Indent}
\algtext*{EndIndent}

\newcommand{\cmark}{\ding{51}}%

\begin{document}
	
	\title{Performance Analysis and Optimization of FAS-ARIS Communications for 6G:\\System Modeling and Analytical Insights}
	
	\author{Hong-Bae Jeon,~\IEEEmembership{Member,~IEEE}, Kai-Kit Wong,~\IEEEmembership{Fellow,~IEEE}, and Chan-Byoung Chae,~\IEEEmembership{Fellow,~IEEE}
		
		\thanks{This work was supported by Hankuk University of Foreign Studies Research Fund of 2026. \textit{(Corresponding Author: Chan-Byoung Chae.)}}
		\thanks{H.-B. Jeon is with the Department of Information Communications Engineering, Hankuk University of Foreign Studies, Yong-in 17035, South Korea (e-mail: hongbae08@hufs.ac.kr).}
		\thanks{K.-K. Wong is with the Department of Electronic and Electrical Engineering, University College London, WC1E 6BT London, U.K., and also with the Yonsei Frontier Laboratory, Yonsei University, Seoul 03722, South Korea (e-mail: kai-kit.wong@ucl.ac.uk).}
		\thanks{C.-B. Chae is with the School of Integrated Technology, Yonsei University, Seoul 03722, South Korea (e-mail: cbchae@yonsei.ac.kr).}
		}

	\maketitle

	\begin{abstract}
This paper introduces a unified analytical and optimization framework for fluid antenna system-active reconfigurable intelligent surface (FAS-ARIS) communications in 6G. By combining the port reconfigurability of FAS with the signal amplification of ARIS, the proposed design enables more flexible control of the propagation environment and enhanced link reliability beyond what passive solutions can offer. We first derive the optimal ARIS amplification gain under a reflection power constraint to maximize the user's signal-to-noise ratio (SNR). Using a block-diagonal matrix approximation, we obtain a tractable outage expression and a tight independent-antenna equivalent upper-bound. Building on this, we establish the monotonic relationship between outage and effective channel gain, which enables a closed-form solution for ARIS phase optimization under limited channel state information (CSI). To further improve spectral efficiency, we propose a region-partitioned throughput optimization framework that achieves near-optimal performance without exhaustive search, thereby verifying its low computational complexity. Extensive simulations confirm the accuracy of the analysis and demonstrate consistent gains in outage and throughput compared to baselines.

	\end{abstract}

	\begin{IEEEkeywords}
		Fluid antenna system, active reconfigurable intelligent surface, 
		outage probability, throughput optimization.
			\end{IEEEkeywords}

	\IEEEpeerreviewmaketitle
	
	\section{Introduction}
\lettrine{O}{ver} the past few years, the emergence of sixth-generation (6G) wireless networks has promised major improvements in connectivity, latency, and energy- and spectral-efficiency~\cite{seman, HBFSO, mapxmag, hyoo}. At the core of these advancements lies multiple-input multiple-output (MIMO) technology, which uses antenna arrays to exploit spatial diversity and multiplexing gains~\cite{trimimo, trimimo22}. However, deploying MIMO in compact devices such as smartphones and Internet-of-things (IoT) devices is challenging, because conventional MIMO requires antenna spacing of at least half a wavelength to achieve low correlation between antennas. To address this limitation, the fluid antenna system (FAS) (also known as a movable antenna) has been proposed as a promising alternative~\cite{FAS, fluidtut, movetut, movemag}. FAS enables the use of spatial degrees of freedom (DoF) without requiring multiple physically separated antennas by allowing a single antenna element to switch among multiple candidate positions within a predefined region~\cite{FASJSAC}. By selecting the port with the strongest instantaneous signal, an $N$-port FAS can achieve diversity gains comparable to those of an $N$-antenna MIMO system, while using only one radio-frequency (RF) chain~\cite{fasopdg}. Practical prototypes, including liquid-metal~\cite{liquid} and pixel-based reconfigurable antennas~\cite{pixel}, have demonstrated the feasibility of FAS and its strong potential to deliver significant diversity gains and robust mitigation against small-scale fading~\cite{FAScor, fasopdg, newcor}.

Building on its promising capabilities, FAS has been integrated into various wireless paradigms. In~\cite{FASqout}, the authors highlight that MIMO-FAS significantly outperforms conventional MIMO systems by leveraging reconfigurable port selection and spatial diversity by the proposed ``$q$-outage'' capacity metric. The work in~\cite{newcor} introduces a spatial block-correlation model as a tractable yet accurate approximation for arbitrary 1D or 2D FAS correlation structures, enabling efficient analysis and simulation by alleviating the complexity of realistic models such as Clarke's or Jakes'. A common insight shared by recent FAS studies is that even a compact single-antenna architecture can exploit spatial diversity to achieve performance comparable to multi-antenna systems. Both~\cite{fasopdg} and~\cite{perlimfas} show that, by intelligently leveraging port reconfigurability, a single-RF-chain FAS can approximate the capacity and outage performance of conventional maximum-ratio combining (MRC) receivers, thereby surpassing the limitations of traditional single-antenna designs. The work in~\cite{fasover} shows that while spatial oversampling can greatly enhance the diversity and outage performance of FAS, and to mitigate substantial switching overhead, the authors propose a low-complexity port selection strategy that leverages spatial correlation to reduce switching while maintaining near-optimal performance. Recent studies also demonstrate that machine-learning (ML) techniques are being actively explored in FAS, with deep reinforcement learning applied to joint port selection and precoder design for multiuser MIMO with integrated-sensing-and-communications (ISAC)~\cite{fasrl}, and model-free multi-agent learning frameworks utilized for opportunistic scheduling and optimal port selection in dynamic environments~\cite{fasal}. Collectively, these works validate the potential of FAS in enhancing coverage and reliability in compact wireless terminals.

{While FAS effectively addresses spatial limitations at the receiver by exploiting port reconfigurability, another major technological breakthrough in 6G is the reconfigurable intelligent surface (RIS), which enables environmental control of radio propagation. By adjusting the phase shifts (and potentially the amplitude) of a large number of reflecting elements~\cite{RISvtm, holotut}, RIS can steer incident signals toward the receiver, create artificial line-of-sight (LoS) paths, and mitigate blockage or harsh propagation conditions~\cite{dsRIS22}. However, conventional passive-RIS (PRIS) suffers from the well-known double-fading effect, where the end-to-end channel experiences the product of two path losses (e.g., base-station (BS)-RIS and RIS-user), which fundamentally limits its effectiveness in long-range or high-frequency scenarios. To overcome this limitation, active-RIS (ARIS) has been proposed~\cite{aristut}, where amplification circuits are embedded into RIS elements to actively boost the reflected signal. While ARIS significantly improves SNR and coverage, it also introduces new challenges, including amplification noise, reflection power constraints, and increased optimization complexity~\cite{aris1, aris5, aris4}.}

{Motivated by the complementary capabilities of FAS and RIS, their integration has recently attracted growing research interest. In particular, combining FAS with PRIS (FAS-PRIS) allows the environment to shape the propagation channel while the receiver exploits spatial diversity via port selection. In~\cite{FASRIS, FASRIS22}, the authors proposed a FAS-PRIS framework using a block-diagonal correlation model to analyze outage probability and throughput in both CSI-aware and CSI-free settings~\cite{FASRIS22}, demonstrating significant performance benefits. The authors in~\cite{fasrisper} examined average throughput in a FAS-PRIS system and emphasized the importance of FAS correlation. 
The authors in~\cite{powerFASRIS} investigate transmit power minimization under quality-of-service (QoS) constraints in a multiuser FAS-PRIS system by jointly optimizing the BS precoder, PRIS elements, and FAS positions at the users. Despite these advances, FAS-PRIS systems remain inherently power-limited due to the passive nature of PRIS and therefore struggle to maintain reliable performance in severe path-loss environments.}

{From this perspective, integrating FAS with ARIS represents a natural and necessary evolution from FAS-PRIS architectures. By replacing PRIS with ARIS, the environment is no longer restricted to passive beam shaping but can actively strengthen the incident signal through amplification, thereby complementing the spatial diversity gain provided by FAS. At the same time, this integration introduces fundamentally new challenges that do not arise in FAS-only or FAS-PRIS systems, as the amplification noise and power constraints of ARIS interact with the spatial correlation among densely sampled FAS ports. This coupled signal-noise-correlation effect renders existing analytical frameworks inadequate and directly motivates the development of a dedicated performance analysis and optimization framework for FAS-ARIS systems.}

Research on FAS-ARIS integration, however, is still in its infancy, largely due to the increased system complexity by dynamic amplification and noise~\cite{aris1}. FAS-ARIS, however, is markedly different from the aforementioned FAS-PRIS systems, because the active surface not only enhances the cascaded channel but also injects amplification noise whose power scales with the reflection gain, resulting in a non-trivial coupling between the desired signal, noise, and FAS port correlation. This makes exact performance analysis substantially more difficult than in the passive case. The authors in~\cite{FASARIS22} investigate the outage probability of FAS-ARIS systems using a block-correlation model and derive a closed-form expression, demonstrating its accuracy and the system's performance advantage for reliable 6G communications. 

In~\cite{FASARIS}, the authors investigate the relative importance of FAS and ARIS by developing an optimization-based system model and show that their significance depends on system complexity, with their integration yielding enhanced robustness and efficiency. In~\cite{fasaristnse}, the authors introduced an FAS-ARIS framework that exploits multipath propagation for high-accuracy localization in non-LoS (NLoS) environments, demonstrating that controlled amplification and phase adjustment can turn detrimental channel scattering into useful spatial information. Despite recent progress, as shown in Table~\ref{tabcomp}, most existing studies remain confined to PRIS scenarios~\cite{FASRIS22, fasrisper, FASRIS, powerFASRIS}, which still inherently suffer from limited signal amplification capabilities and reduced adaptability to dynamic environments. Even in the limited research on FAS integrating with ARIS, these works often rely on overly simplified or idealized FAS correlation models~\cite{FASARIS, fasaristnse}, restrict their analysis to outage probability alone~\cite{FASARIS, FASARIS22}, or assume overly simplified CSI by Rayleigh Fading~\cite{FASARIS22} or ideal CSI availability~\cite{FASARIS}, resulting in impractical signaling overhead and limited scalability of the FAS-ARIS systems. As a result, a comprehensive performance analysis of FAS-integrated ARIS systems under realistic spatial correlation and partial CSI remains an open and critical research challenge.

To bridge these research gaps, in this paper, we present a comprehensive investigation into the performance and optimization of integrated FAS-ARIS communication systems for 6G networks. We develop a unified framework that accounts for spatial correlation among fluid antenna ports, models the dynamic noise amplification introduced by ARIS elements, and accommodates limited-CSI scenarios, an inherent challenge in FAS-enabled environments. Through analytical derivations and numerical evaluations, we provide valuable design guidelines and practical insights for the efficient deployment of FAS-ARIS systems. Unlike prior studies on FAS-PRIS or partial FAS-ARIS architectures, which mainly focus on empirical performance trends or limited configurations, this work develops a unified analytical framework that quantifies the outage probability and throughput of the integrated FAS-ARIS system. In addition, we derive tractable approximations that explicitly capture the interplay among the parameters, allowing quantitative evaluation of key system parameters that were not analytically addressed in earlier works. The main contributions of this work are summarized as follows:
\begin{table}[t]
\centering
\caption{Comparison for FAS-ARIS/PRIS Studies}
\label{tabcomp}
\begin{tabular}{|>{\centering}m{2.15cm}|>{\centering}m{0.6cm}|>{\centering}m{0.6cm}|>{\centering}m{0.6cm}|>{\centering}m{0.6cm}|>{\centering}m{0.6cm}|>{\centering}m{0.6cm}|}
\hline
\centering \textbf{Reference}& \cite{FASRIS22} &\cite{fasrisper}& \cite{FASARIS22} & \cite{FASARIS} &\cite{fasaristnse} & {\textbf{This\\Work}} 
\tabularnewline
\hline
\centering Amplification\\Modeling
& - 
& -
& \cmark
& \cmark
& \cmark
& \cmark
\tabularnewline
\hline
\centering {Practical\\Correlation Model}
& \cmark
& \cmark
& \cmark
& -
& -
& \cmark
\tabularnewline
\hline
\centering {Realistic CSI Assumption} 
& \cmark
& -
& -
& -
& \cmark
& \cmark
\tabularnewline
\hline
\centering {Comprehensive Scope of Analysis} 
& \cmark
& \cmark
& -
& -
& -
& \cmark
\tabularnewline
\hline
\end{tabular}
\end{table}
\begin{itemize}

\item {We develop a unified downlink FAS-ARIS communication framework that explicitly captures the {coupled effects} of (i) spatial correlation among densely sampled FAS ports, and (ii) dynamic noise amplification induced by ARIS circuitry. Unlike, as depicted in Table~\ref{tabcomp}, existing FAS-PRIS studies that neglect amplification noise, or recent FAS-ARIS works that rely on idealized or oversimplified correlation models, the proposed framework integrates a block-diagonal matrix approximation (BDMA)~\cite{newcor} for realistic FAS correlation modeling with a power-constrained ARIS amplification model. Within this joint framework, we analytically derive the optimal ARIS amplification gain that maximizes the instantaneous received signal-to-noise ratio (SNR), thereby revealing how ARIS power, amplification noise, and FAS correlation interact in determining link reliability.}

\item We analyze the outage probability of the proposed FAS-ARIS system by evaluating the distribution of the parameters, incorporating the spatial correlation of FAS ports modeled via the BDMA framework and the amplification gain and dynamic noise introduced by the ARIS. To enable comprehensive performance evaluation, we also introduce an analytically tractable upper-bound based on the independent antenna equivalent (IAE) model. This two-tier modeling approach ensures both accuracy and computational efficiency, as verified by extensive simulations under realistic FAS-ARIS-aided environments.

\item We formulate the outage probability minimization problem with respect to the ARIS phase shifts and show that the outage probability decreases monotonically with the effective channel gain. Leveraging this property, we derive an optimal ARIS phase configuration that maximizes constructive signal combining at the selected FAS port. This solution provides important insights for practical ARIS phase control under limited-CSI conditions.

\item To optimize throughput while avoiding exhaustive search, we propose a computationally efficient algorithm that leverages the IAE-based approximation of outage probability. By dividing the search space into three operating regions, concave, quasiconcave, and non-monotonic, we adopt region-specific optimization strategies, consequently showing that the resulting algorithm is computationally efficient.

\item We evaluate the proposed FAS-ARIS system through extensive Monte Carlo simulations across diverse settings. 
The results validate that the proposed framework accurately captures system performance. Additionally, the IAE-based upper-bound is shown to be a tight approximation across various configurations. Furthermore, the proposed throughput optimization algorithm achieves near-optimal performance without exhaustive search over the whole domain, offering a practical trade-off between accuracy and computational complexity. In all scenarios, the FAS-ARIS system consistently outperforms conventional configurations, underscoring its potential as a highly efficient and robust solution for 6G.
\end{itemize}

\begin{figure}[t]
	\begin{center}
		\includegraphics[width=0.6\columnwidth,keepaspectratio]%
		{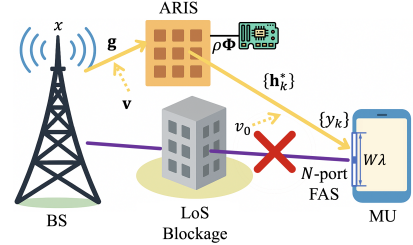}
		\caption{System model of the FAS-ARIS-aided wireless network.}
		\label{fig_overview}
	\end{center}
\end{figure}
\section{System Model}\label{4.1}
Consider a downlink FAS-ARIS system comprising a BS with a fixed-position antenna, an ARIS with $M$ reflecting elements, each with amplification gain $\rho$, and a mobile user (MU) equipped with FAS featuring $N$ ports, shown in Fig.~\ref{fig_overview}. The MU can dynamically select the most favorable port based on channel conditions across a linear region of length $W\lambda$, where $W$ represents the normalized distance relative to the carrier wavelength $\lambda$~\cite{FAS}. It is assumed that there is no direct link between the BS and the MU, making the ARIS essential for establishing and maintaining the communication link.

We assume that the ARIS is positioned near the BS~\cite{HBRIS, How, FASRIS22}, specifically when the BS-ARIS separation is less than the advanced MIMO Rayleigh distance (MIMO-ARD), 
thereby ensuring that the BS-ARIS channel is predominantly LoS~\cite{FASRIS}. Therefore, the BS-ARIS channel $\mathbf{g}\in\mathbb{C}^{M\times1}$ can be expressed as $\mathbf{g}=\sqrt{\beta}\bar{\mathbf{g}}$, where $\beta$ represents the path-loss between the BS and ARIS, and $\bar{\mathbf{g}}$ denotes the deterministic LoS component with each component has magnitude of 1 and $||\bar{\mathbf{g}}||_2=\sqrt{M}$~\cite{HBRIS}.
Meanwhile, due to the potentially long distance between the ARIS and the MU, the ARIS-MU link may include several NLoS components, which leads to the modeling of the ARIS-MU channel as a Rician fading model. Denoting the channel from the ARIS to the $k$-th port as $\mathbf{h}_k^*\in\mathbb{C}^{1\times M}$, it is given by
\begin{equation}
\label{RISu}
\mathbf{h}_k^* \triangleq \sqrt{\frac{\alpha K}{K+1}} \bar{\mathbf{h}}_k^* + \sqrt{\frac{\alpha}{K+1}} \tilde{\mathbf{h}}_k^*,
\end{equation}
where $\alpha$ is the path-loss between the ARIS and the MU, $K$ is the Rician factor, $\bar{\mathbf{h}}_k^*$ is the deterministic LoS component with $\lVert \bar{\mathbf{h}}_{k} \rVert_2^2 = M$, and $\tilde{\mathbf{h}}_k^*$ is the NLoS component drawn from $\mathcal{CN} \left(\mathbf{0},  \mathbf{I}_M \right)$, where $\mathbf{I}_M$ is an $M\times M$ identity matrix.
Here, we assume the far-field model, thus all ports in the FAS have the same LoS component, i.e., $\bar{\mathbf{h}}_k=\bar{\mathbf{h}}~(\forall k)$. {We assume that ARIS operates with only statistical CSI, as obtaining full instantaneous CSI in FAS-ARIS systems is highly challenging and incurs considerable overhead, especially in dynamic environments~\cite{FASRIS, FASRIS22, FASARIS22}. 
Note that this does not imply that MU lacks instantaneous channel knowledge. In the considered FAS-ARIS architecture, CSI availability is asymmetric across network entities~\cite{FASRIS, FASRIS22, FASARIS22}. The ARIS determines its phase configuration based on long-term channel statistics, i.e., the LoS components, which remain quasi-static over multiple transmission intervals, thereby avoiding frequent CSI acquisition and feedback overhead, which will be shown in Section~\ref{opop}. On the other hand, the MU performs fluid antenna port selection based on instantaneous SNR measurements across candidate ports, which will be shown in Section~\ref{oag}. This process does not require explicit estimation of the full cascaded BS-ARIS-MU channel, but relies solely on local observations at the receiver. Such a receiver-side adaptation mechanism is fully compatible with the assumption of statistical CSI at the ARIS.
}

In the case of FAS, the close spacing of the $N$ ports naturally leads to correlation among the channel vectors $\{\mathbf{h}_k^*\}_{k=1}^N$. According to Jakes' model, the correlation coefficient between the first and the $k$-th port is expressed as~\cite{FAScor}
\begin{equation}
\label{jakes}
\begin{aligned}
\mu_{1,k} \triangleq J_0 \left(\frac{2\pi(k-1)}{N-1}W\right)~(k \in \{ 1, \cdots, N\}\triangleq \mathbb{N}),
\end{aligned}
\end{equation}
where $J_0$ denotes the zeroth-order Bessel function of the first kind. The correlation matrix of the vector $\left[h_1^{(m)} \cdots h_N^{(m)}\right]^{\mathrm{T}}\in\mathbb{C}^N$, which consists of $m$-th entry of $\{\mathbf{h}_k\}_{k=1}^N$, where $m$ represents the $m$-th element of the ARIS, therefore forms a Toeplitz structure as described by~\cite{FAScor, newcor}
\begin{equation}
\label{newcorr}
\begin{aligned}
\mathbf{\Sigma} &\triangleq \textrm{topelitz}(\mu_{1,1}, \cdots, \mu_{1,N})\in\mathbb{R}^{N\times N}.
\end{aligned}
\end{equation}
To enable accurate yet simplified analysis, we adopt the BDMA model~\cite{newcor} to represent the spatial correlation structure. Herein, the spatial correlation among the $N$ ports is approximated using a block-diagonal correlation matrix:
\begin{equation}
\label{bdmahat}
\mathbf{\hat{\Sigma}} \triangleq \textrm{Blkdiag}(\mathbf{C}_1 , \cdots, \mathbf{C}_B) \in \mathbb{R}^{N \times N}.
\end{equation}
Each $\mathbf{C}_b$ is a constant correlation matrix of size $L_b$, with all off-diagonal entries set to the correlation coefficient $\mu_b^2$:
\begin{equation}
\label{cbdef}
\mathbf{C}_b \triangleq \textrm{toeplitz}(1, \mu_b^2, \cdots, \mu_b^2) \in \mathbb{R}^{L_b \times L_b},
\end{equation}
where $\sum_{b=1}^B L_b =N$. Note that with letting $\mu_b=\mu~(\forall b)$ for simplicity, where $\mu$ is close to 1, we can obtain $B, \{L_b\}_{b=1}^B$, and $\mathbf{\hat{\Sigma}}$ in~\cite[Algorithm 1]{newcor}. Under this assumption, the block-diagonal model successfully captures the spectrum of the true correlation matrix~\cite{newcor, FASRIS, FASRIS22, FASARIS22} and is preferred over full-correlation models like Jakes' because the latter are mathematically intractable for large FAS arrays, while earlier simplified models (e.g., constant correlation~\cite{FAScor}) lack realism and can mislead performance conclusions. In other words, BDMA strikes a principled trade-off between accuracy and tractability, by capturing the dominant eigenvalues that determine performance, supported by spatial oversampling and eigenvalue decaying properties of FAS~\cite{newcor}.

Within each submatrix $\mathbf{C}_b$ of size $L_b$, the channels $\{\mathbf{h}_k\}_{k=1}^{L_b}$ are mutually correlated. Based on the expressions in~(\ref{RISu}) and~(\ref{cbdef}), the channel vector $\mathbf{h}_k$ is modeled as~\cite{fama}
\begin{equation}
\label{parcor}
\mathbf{h}_k = \sqrt{\frac{\alpha K}{K+1}} \bar{\mathbf{h}} + \mu \tilde{\mathbf{h}}_b + \sqrt{1 - \mu^2} \mathbf{e}_k,
\end{equation}
where $\tilde{\mathbf{h}}_b$ and $\mathbf{e}_k$ are independent random vectors drawn from $\mathcal{CN} \left(\mathbf{0}, \frac{\alpha}{K+1} \mathbf{I}_M \right)$. For ARIS, we define the reflection and amplification matrix as $\rho \mathbf{\Phi} \triangleq \rho  \textrm{diag}\left(\{e^{j\theta_m}\}_{m=1}^M\right)$~\cite{aris4}, where each $\theta_m \in [0, 2\pi)$ denotes the phase shift introduced by the $m$-th ARIS element. When BS transmits a signal $x$ satisfying $\mathbb{E}[|x|^2] = 1$ to the MU through the ARIS, the signal received at the $k$-th port of MU is given by~\cite{aris5}
\begin{equation}
\label{aris}
y_k = \sqrt{P}\rho \mathbf{h}_k^* \mathbf{\Phi} \mathbf{g} x +\rho \mathbf{h}_k^* \mathbf{\Phi v} + v_0~(k \in \mathbb{N}),
\end{equation}
where $P$ represents the BS transmit power, $\mathbf{v}$ is the dynamic noise at the ARIS, and $v_0$ denotes the additive thermal noise at MU. In practice, $\mathbf{v}$ arises from hardware non-idealities such as gain instability, phase perturbations, and device-level thermal noise~\cite{aris1}, all of which distort the reflected signal and introduce excess noise power. Consequently, the analytical noise model of ARIS can be interpreted as characterizing the rate degradation imposed by realistic hardware imperfections, rather than ideal linear amplification. We assume $\mathbf{v} \sim \mathcal{CN}(\mathbf{0}, \sigma^2 \mathbf{I}_M)$ and $v_0 \sim \mathcal{CN}(0, \sigma_0^{2})$. Under a reflection power budget of $P'$, the power constraint is expressed as
\begin{equation}
\label{refl}
\begin{aligned}
&P \rho^2|| \mathbf{\Phi g} ||_2^2 +\rho^2 || \mathbf{\Phi v} ||_2^2 = P \rho^2 \beta M + \rho^2 M \sigma^2 \le P'.
\end{aligned}
\end{equation}
{\section{Problem Characterization}
\label{secpf}
We first formally define the performance metrics and optimization objectives considered in this paper.
\subsection{Outage Probability}
\label{secop}
Given a target transmission rate $R$, the outage probability at the MU is defined as
\begin{equation}
\label{outage}
P_{\mathrm{out}}(R) \triangleq \mathbb P\left[\log_2(1+\gamma) < R \right],
\end{equation}
where $\gamma$ denotes the received SNR after FAS port selection and ARIS-assisted transmission.
\subsection{Effective Throughput}
\label{secth}
To capture the trade-off between spectral efficiency and reliability, we define the effective throughput as
\begin{equation}
\label{Tdef}
T(R) \triangleq R\big(1-P_{\mathrm{out}}(R)\big),
\end{equation}
which represents the average successfully-delivered information rate.
\subsection{Target Problems}
Based on the above definitions, this work considers the following analytical and optimization problems:
\begin{itemize}
\item \textbf{Analysis on outage probability}: deriving a tractable analytical characterization of $P_{\mathrm{out}}$ and its bounds under the considered FAS-ARIS system model.
\item \textbf{Outage minimization}: minimizing $P_{\mathrm{out}}$ with respect to the ARIS configuration under practical power constraints.
\item \textbf{Throughput maximization}: maximizing $T(R)$.
\end{itemize}
These are analyzed and solved in Sections~\ref{sec3}-\ref{thput}.}
\section{Analysis on Outage Probability}
\label{sec3}
In this section, we provide a detailed analysis of the outage probability of the considered FAS-ARIS systems. {Note that although several mathematical tools employed in this paper are individually well established, the technical contribution of this analysis lies in integrating these tools into a unified outage analysis framework tailored to the FAS-ARIS architecture. By explicitly accounting for block-wise FAS correlation, ARIS-induced amplification noise, and limited CSI, the proposed formulation enables an analytically exact and tractable characterization of the outage probability, which has not been reported in existing FAS-ARIS studies~\cite{FASARIS22, FASARIS, fasaristnse}. More importantly, this analytical framework provides the foundation for the optimization framework developed in the subsequent sections, where the derived outage expressions are exploited to systematically design the ARIS configuration and optimize the throughput performance. Such a principled design would be difficult to achieve through simulation-only evaluations.}
\subsection{Optimal Amplification Gain $\rho^*$ of FAS-ARIS Systems}
\label{oag}
According to~(\ref{aris}), the SNR $\gamma_k$ at the $k$-th port of MU can be expressed as $\gamma_k = \frac{PA_k^2}{B_k^2 \sigma^2 +\frac{\sigma_0^2}{\rho^2}}$, where $A_k \triangleq |\mathbf{h}_k^* \mathbf{\Phi} \mathbf{g}|$ is the effective channel gain, and $B_k \triangleq || \mathbf{h}_k^* \mathbf{\Phi} ||_2$ quantifies the dynamic noise amplification due to ARIS. Hence, the SNR $\gamma$ of MU is given by~\cite{FAS}
\begin{equation}
\label{snrmax}
\gamma\triangleq\max_{k\in\mathbb{N}} \gamma_k= \max_{k\in\mathbb{N}} \frac{PA_k^2}{B_k^2 \sigma^2 +\frac{\sigma_0^2}{\rho^2}}.\footnote{{If the BS transmits multiple user streams simultaneously, the received signal at user $k\in\{1,\cdots,K\}$ contains inter-user interference, and the resulting SINR at the $n$th FAS port is given by
\begin{equation}
\label{sinrkn_fn}
\begin{cases}
\mathrm{SINR}_{k,n}=\frac{P\rho^2\big|\mathbf h_{k,n}^\ast \boldsymbol{\Phi}\mathbf G \mathbf f_k\big|^2}
{P\rho^2\sum_{i\neq k}\big|\mathbf h_{k,n}^\ast \boldsymbol{\Phi}\mathbf G \mathbf f_i\big|^2
+\rho^2\sigma^2\|\mathbf h_{k,n}^\ast\boldsymbol{\Phi}\|^2+\sigma_0^2},\\
\gamma_k=\max_{n}\mathrm{SINR}_{k,n}.
\end{cases}
\end{equation}
A full multi-user QoS optimization would then require the joint design of $\{\mathbf f_k\}$, user scheduling, $\rho$, and $\boldsymbol{\Phi}$, leading to a highly coupled and non-convex problem. The proposed single-user FAS-ARIS system should therefore be viewed as a desired-signal-boosting building block, while a full multi-user extension of the FAS-ARIS system would require an alternating-optimization (AO) framework that jointly balances desired-signal enhancement and interference suppression, as preliminarily explored in~\cite{FASARIS}.}}
\end{equation}
From~(\ref{snrmax}), since every $\{\gamma_k\}$ increases with $\rho$, so does $\gamma$, and it is clear that the optimal $\rho^*$ that maximizes every $\{\gamma_k\}$ is, by~(\ref{refl}), given by
\begin{equation}
\label{optrho}
\rho^* = \min\left(\sqrt {\frac{P'}{M(P\beta +\sigma^2)}}, \rho_{\max}\right),
\end{equation}
where $\rho_{\max}$ is the maximum amplification gain of the ARIS.
\begin{remark}
\label{rhodet}
{Since the ARIS is assumed to have access only to statistical CSI, adapting element-wise heterogeneous amplification gains $\{\rho_m\}_{m=1}^M$ to instantaneous channel realizations is hard to implement. In general, optimizing $\{\rho_m\}_{m=1}^M$ under fading channels would imply an instantaneous gain-control policy $\{\rho_m(\{\mathbf h_k\})\}_{m=1}^M$, as the resulting $\gamma$ is itself a random quantity determined by the random variables $\{A_k\}_{k=1}^N$ and $\{B_k\}_{k=1}^N$. Such a policy would require real-time CSI acquisition and frequent gain updates at the surface, leading to prohibitive overhead and contradicting the statistical-CSI assumption~\cite{FASRIS, FASARIS22}. Therefore, we adopt a deterministic long-term ARIS amplification gain $\rho$ and set $\forall\rho_m=\rho$, which preserves analytical tractability and yields a robust and implementable FAS-ARIS configuration design aligned with practical ARIS hardware constraints.}
\end{remark}
\begin{remark}
\label{revrm1}
{In practical deployment scenarios, even when the dominant LoS path is blocked, weak direct components due to scattering and diffraction may still exist, resulting in a non-zero but weak direct BS-MU link. This is because the direct BS-MU path experiences significantly larger path-loss than the short-range LoS BS-ARIS link, while the ARIS-assisted signal benefits from coherent beamforming and amplification gain, thereby rendering the residual direct component comparatively negligible. When a weak direct BS-MU link $\{h_{d,k}\}$ is coherently combined with the ARIS-assisted signal, the resulting received SNR at port $k$ is given by
\begin{equation}
\label{gkcoh}
\gamma_k^{\mathrm{(coh)}}=\frac{P\big|\rho\mathbf h_k^{*}\Phi\mathbf g + h_{d,k}\big|^2}
{\rho^2\sigma^2 B_k^2 + \sigma_0^2},
\end{equation}
By the triangle inequality, the coherent combination admits the following sandwich bound:
\begin{equation}
\label{sand11}
\frac{P\big(\rho A_k - |h_{d,k}|\big)^2}{\rho^2\sigma^2 B_k^2 + \sigma_0^2}
\le
\gamma_k^{\mathrm{(coh)}}
\le
\frac{P\big(\rho A_k + |h_{d,k}|\big)^2}{\rho^2\sigma^2 B_k^2 + \sigma_0^2}.
\end{equation}
Herein, if the direct link is weak in the sense that $|h_{d,k}|\le \epsilon\rho A_k~(\epsilon\ll1)$, the above inequality reduces to
\begin{equation}
\label{sand22}
(1-\epsilon)^2\gamma_k \le \gamma_k^{\mathrm{(coh)}}\le(1+\epsilon)^2\gamma_k.
\end{equation}
This result shows that, as long as the direct BS-MU link remains sufficiently weak, the effect of coherent superposition between $\rho\mathbf h_k^{*}\Phi\mathbf g$ and $h_{d,k}$, including both constructive and destructive phase alignments, is confined within a bounded multiplicative factor $(1\pm \epsilon)^2$ in~\eqref{sand22}. As a consequence, the performance trends are preserved up to this bounded perturbation.}
\end{remark}
\begin{remark}
\label{multinb}
{To assess whether the performance gains of the proposed FAS-ARIS architecture can be replicated by simply increasing the number of BS antennas $N_b$, we examine the effect of $N_b$ under the same signal model, ARIS amplification constraint, and FAS port selection rule. By extending the BS-ARIS channel to $\mathbf G\in\mathbb C^{M\times N_b}$ and applying maximum-ratio-combining (MRT) at BS, $A_k^{(N_b)2}$ corresponds to $A_k^2$ scales as
\begin{equation}
\label{hkgks}
A_k^{(N_b)2} =\big|\mathbf h_k^{*}\boldsymbol\Phi \mathbf G \mathbf f^\star\big|^2=N_b\big|\mathbf h_k^{*}\boldsymbol\Phi {\mathbf g}\big|^2=N_b A_k^2.
\end{equation}
Meanwhile, enforcing the ARIS reflection power constraint yields an $N_b$-dependent optimal amplification gain
\begin{equation}
\label{optrhonb}
\rho^\star(N_b)=\min\left(
\sqrt{\frac{P'}{M(P\beta N_b+\sigma^2)}}, 
\rho_{\max}
\right).
\end{equation}
Substituting $\rho^\star(N_b)$ into the SNR expression and dividing both the numerator and denominator by $\rho^{\star 2}(N_b)$, the resulting SNR at the $k$th FAS port can be written as
\begin{equation}
\label{gknbw}
\gamma_k(N_b)=\frac{PN_b A_k^2}
{B_k^2\sigma^2+\frac{\sigma_0^2}{\rho^{\star 2}(N_b)}}.
\end{equation}
Equation~\eqref{gknbw} shows that the $N_b$-antenna case retains the same functional structure as the proposed FAS-ARIS model, with the only structural difference being an $N_b$-fold scaling of the effective incident-link power, reflected to the denominator of~\eqref{gknbw}, and an accompanying adjustment of the effective noise level via $\rho^\star(N_b)$. As a result, the entire analytical framework remains applicable after appropriately dividing the target outage threshold by $N_b$ and replacing the relevant scalar parameters with their $N_b$-dependent counterparts.}
\end{remark}
\subsection{Formulation of Outage Probability ${P}_{\mathrm{out}}$}
By applying $\rho^*$, the outage probability ${P}_{\mathrm{out}}$ at MU for given target rate $R$ is given in~\eqref{outage}.
Herein, by~(\ref{snrmax}) the term inside $\mathbb{P}[\cdot]$ is equivalent with
\begin{equation}
\label{equi}
\begin{aligned}
\max_{k\in\{1, \cdots, N\}} \frac{PA_k^2}{B_k^2 \sigma^2 +\frac{\sigma_0^2}{\rho^{*2}}} <2^R -1
&\leftrightarrow A_k \le \sqrt {p_1 B_k^2 +p_2 }~(\forall k),
\end{aligned}
\end{equation}
where $p_1 = (2^R -1) \frac{\sigma^2 }{P},~p_2=(2^R -1)\frac{\sigma_0^2 }{P\rho^{*2} }$.
{Hence, by conditioning $B_k^2=s_k~(\forall k)$ and integrating the probability of~\eqref{equi} with respect to the PDF $f_{\{B_k^2\}_{k=1}^N} (\cdot)$ of $\{B_k\}_{k=1}^N$,} ${P}_{\mathrm{out}}$ can be written as
\begin{equation}
\label{outint}
\begin{aligned}
{P}_{\mathrm{out}}=\int_0^\infty \cdots \int_0^\infty &F_{\{A_k\}_{k=1}^N} \left(\{ \sqrt {p_1 s_k +p_2 } \}_{k=1}^N\right)\\ &f_{\{B_k^2\}_{k=1}^N} \left(\{s_k\}_{k=1}^N\right) dS,
\end{aligned}
\end{equation}
where $dS\triangleq ds_1 \cdots ds_N$ and $F_{\{A_k\}_{k=1}^N} (\cdot)$ is the joint CDF of $\{A_k\}_{k=1}^N$. In the following, we evaluate $F_{\{A_k\}_{k=1}^N}$ and $f_{\{B_k^2\}_{k=1}^N}$ by applying~(\ref{parcor}) to $A_k=|\mathbf{h}_k^*\mathbf{\Phi} \mathbf{g}|$ and $B_k=||\mathbf{h}_k^* \mathbf{\Phi} ||_2$.

\subsection{CDF of $\{A_k\}_{k=1}^N: F_{\{A_k\}_{k=1}^N}$}
The definition of $\{A_k\}_{k=1}^N$ is exactly same with the $\{A_k\}_{k=1}^N$ in~\cite{FASRIS}. Hence, by substituting $\{\sqrt{p_1 s_k +p_2 }\}_{k=1}^N$ to the CDF of $\{A_k\}_{k=1}^N$~\cite{FASRIS}, we can get
\begin{equation}
\label{CDFAk}
\begin{split}
&F_{\{A_k\}_{k=1}^N}(\{\sqrt{p_1 s_k +p_2 }\}_{k=1}^N)\\
=&\prod_{b=1}^B \int_0^\infty \frac{2r_b }{{\bar\sigma}^2\mu^2}e^{-\frac{r_b^2+|\eta|^2}{{\bar\sigma}^2\mu^2}} I_0 \left(\frac{2r_b |\eta|}{{\bar\sigma}^2 \mu^2 }\right)\\
&~~~~~~~~\prod_{k\in\mathbb{K}_b} \Biggl[1-Q_1 \Biggl(\sqrt{\frac{2}{{\bar\sigma}^2(1-\mu^2)}}r_b, \sqrt{\frac{2(p_1s_k +p_2)}{{\bar\sigma}^2(1-\mu^2)}}\Biggr)\Biggr]dr_b,
\end{split}
\end{equation}
where {$r_b$ denotes a realization of the random variable $\Lambda_b\triangleq |\eta+\mu \tilde {\mathbf h}_b \boldsymbol\Phi \mathbf g^*|$~\cite{FASRIS}}, ${\bar\sigma}^2 \triangleq \frac{M\alpha\beta}{K+1}, \eta\triangleq\sqrt{\frac{\alpha K}{K+1}}\bar{\mathbf{h}}^*\mathbf{\Phi} \mathbf{g}$, $\mathbb{K}_b$ is the index set which is a partition of $\mathbb{N}$ $(\cup_{b=1}^B \mathbb{K}_b=\mathbb{N}, \mathbb{K}_b \cap \mathbb{K}_{b'}=\emptyset~(b\neq b'))$ and corresponds to $\mathbf{C}_b$, $Q_1 (\cdot, \cdot)$ is the first-order Marcum $Q$-function, and $I_\nu (x)$ is a $\nu$-th-order modified Bessel function of the first-kind. 
\subsection{PDF of $\{B_k^2\}_{k=1}^N: f_{\{B_k^2\}_{k=1}^N}$}
For $k\in\mathbb{K}_b$, since $B_k^2 =||\mathbf{h}_k^* \mathbf{\Phi}||_2^2=||\mathbf{h}_k||_2^2$, 
$B_k^2$ has the same distribution with
\begin{equation}
\label{bksd}
||\mathbf{h}_k||_2^2=\left|\left| \tilde{\pmb{\eta}}' + \mu \left[ \tilde{h}_b^{(m)} \right]_{m=1}^M + \sqrt{1 - \mu^2} \left[ e_k^{(m)} \right]_{m=1}^M \right|\right|_2^2,
\end{equation}
where {$\tilde{\pmb{\eta}}' = \sqrt{\frac{\alpha K}{K+1}} \bar{\mathbf{h}}$} and the scalars in~(\ref{bksd}) represent the individual $m$-th elements of the respective matrix and vector variables. By denoting
\begin{equation}
\label{dnt}
\tilde{\pmb{\eta}}' + \mu \left[ \tilde{h}_b^{(m)} \right]_{m=1}^M= \bar{\pmb{\Lambda}}_b',  \left|\left| \bar{\pmb{\Lambda}}_b' \right|\right|_2 = \pmb{\Lambda}_b',
\end{equation}
Eq.~(\ref{bksd}) becomes
\begin{equation}
\label{chiex}
\begin{aligned}
&\left|\left| \bar{\pmb{\Lambda}}_b' + \sqrt{1 - \mu^2} \left[ e_k^{(m)} \right]_{m=1}^M \right|\right|_2^2\\
&=\frac{\alpha(1-\mu^2)}{2(K+1)} \left|\left|\underbrace{\frac{\sqrt{2(K+1)}}{\sqrt{\alpha(1 - \mu^2)}}  \bar{\pmb{\Lambda}}_b' 
+ \sqrt{\frac{2(K + 1)}{\alpha}} \left[ e_k^{(m)} \right]_{m=1}^M}_{[\bar{e}_k^{(m)}]_{m=1}^M} \right|\right|_2^2.
\end{aligned}
\end{equation}
From~(\ref{chiex}), it is clear that
\begin{equation}
\label{bare}
\bar{e}_k^{(m)} \sim \mathcal{CN}\left(\frac{\sqrt{2(K+1)}}{\sqrt{\alpha(1 - \mu^2)}} \bar{\pmb{\Lambda}}_b^{'(m)}, 2 \right)~(m=1, \cdots, M),
\end{equation}
which implies that the real and complex component of $\bar{e}_k^{(m)}$ has unit variance, and $\{\bar{e}_k^{(m)}\}_{m=1}^M$ are all independent because of the definition of $\mathbf{e}_k$. Therefore, $B_k^2$ conditioned by $\pmb{\Lambda}_b'=s_b$ follows a non-central chi-squared distribution scaled by~$C\triangleq\frac{\alpha(1-\mu^2)}{2(K+1)}$, with DoF $2M$ and the non-centrality parameter $\lambda= \frac{1}{C} \sum_{m=1}^M \left|\bar{\pmb{\Lambda}}_b^{'(m)}\right|^2 =\frac{ s_b^2} {C}$. The conditional PDF is given by~\cite{Lap}
\begin{equation}
\label{Bklk}
\begin{aligned}
f_{B_k^2 |\pmb{\Lambda}_b' =s_b} (s_k  )=&\frac{K+1}{\alpha(1-\mu^2)} e^{-\frac{s_k + s_b^2}{\frac{\alpha(1-\mu^2)}{K+1}}} \left(\frac{s_k}{s_b^2}\right)^{\frac{M-1}{2}}\\
& I_{M-1} \left( \frac{2s_b (K+1)\sqrt {s_k}}{\alpha(1-\mu^2)} \right).
\end{aligned}
\end{equation}

Conditioned on $\pmb{\Lambda}_b'$, it is clear that $\{B_k^2 \}_{k\in\mathbb{K}_b}|{\pmb{\Lambda}_b'=s_b}$ are mutually independent. This then allows us to deduce the joint PDF of $\{B_k^2 \}_{k\in\mathbb{K}_b}|{\pmb{\Lambda}_b'=s_b}$ as
\begin{equation}
\label{Bklk22}
\begin{aligned}
&f_{\{B_k^2 \}_{k\in\mathbb{K}_b}|\pmb{\Lambda}_b' =s_b} \left(\{s_k\}_{k\in\mathbb{K}_b}\right)\\
&=\prod_{k\in\mathbb{K}_b}\frac{K+1}{\alpha(1-\mu^2)} e^{-\frac{s_k + s_b^2}{\frac{\alpha(1-\mu^2)}{K+1}}} \left(\frac{s_k}{s_b^2}\right)^{\frac{M-1}{2}}\\
&~~~~~~~~~~~I_{M-1} \left( \frac{2s_b (K+1)\sqrt {s_k}}{\alpha(1-\mu^2)} \right).
\end{aligned}
\end{equation}
To obtain the unconditioned joint PDF,~(\ref{Bklk22}) should be averaged over the distribution of $\pmb{\Lambda}_b'=\left|\left|\tilde{\pmb{\eta}}' + \mu \left[ \tilde{h}_b^{(m)} \right]_{m=1}^M\right|\right|_2$. 
By manipulating it,
\begin{equation}
\label{ncchi}
\begin{aligned}
&\left|\left|\tilde{\pmb{\eta}}' + \mu \left[ \tilde{h}_b^{(m)} \right]_{m=1}^M\right|\right|_2\\
&= \mu \sqrt{\frac{\alpha}{2(K+1)}} 
\left| \left|\underbrace{\frac{1}{\mu} \sqrt{\frac{2(K+1)}{\alpha}} \tilde{\pmb{\eta}}' 
+  \sqrt{\frac{2(K+1)}{\alpha}} \left[ \tilde{h}_b^{(m)} \right]_{m=1}^M}_{[\bar{\tilde{h}}_b^{(m)}]_{m=1}^M} \right|\right|_2,
\end{aligned}
\end{equation}
and it is clear that
\begin{equation}
\label{cn22}
\bar{\tilde{h}}_b^{(m)}\sim \mathcal{CN} \left( \frac{1}{\mu} \sqrt{\frac{2(K+1)}{\alpha}} \tilde{\pmb{\eta}}^{'(m)},  2 \right)~(m=1, \cdots, M),
\end{equation}
which implies that the real and complex component of $\bar{\tilde{h}}_b^{(m)}$ has unit variance, and $\{\bar{\tilde{h}}_b^{(m)}\}_{m=1}^M$ are all independent because of the definition of $\tilde{\mathbf{h}}_b$. Hence, we can conclude that $\pmb{\Lambda}_b'$ follows a non-central chi distribution scaled by $C'\triangleq \frac{\mu}{\sqrt{2}}\sqrt{\frac{\alpha}{K+1}}$, DoF $2M$ and non-centrality parameter $\lambda'= \frac{1}{C'} \sqrt{\sum_{m=1}^M \left|\tilde{\pmb{\eta}}^{'(m)}\right|^2} =\frac{||\tilde{\pmb{\eta}}'||_2} {C'}=
\frac{\sqrt{2 KM}}{\mu}$. The PDF is given by~\cite{novfad}
\begin{equation}
\label{ncchi}
\begin{aligned}
f_{\Lambda_b'}(s_b) =& \frac{\sqrt{2(K+1)}}{\mu \sqrt{\alpha}}  \frac{e^{-\frac{s_b^2 + \frac{\alpha KM}{K+1}}{\frac{\mu^2 \alpha}{K+1}}} \left( \frac{s_b}{\frac{\mu}{\sqrt{2}}\sqrt{\frac{\alpha}{K+1}}} \right)^{2M} \frac{\sqrt{2 KM}}{\mu}}{\left( \frac{\sqrt{4 K(K+1)M}}{\mu^2 \sqrt{\alpha}} s_b \right)^M}\\
&I_{M-1} \left(  \frac{\sqrt{4 K(K+1)M}}{\mu^2 \sqrt{\alpha}} s_b \right).
\end{aligned}
\end{equation}
Hence, the PDF of $\{B_k^2\}_{k\in\mathbb{K}_b}$ is given by
\begin{equation}
\label{pdfbkb}
\begin{aligned}
&f_{\{B_k^2 \}_{k\in\mathbb{K}_b}} \left(\left\{s_k\right\}_{k\in\mathbb{K}_b}\right)\\
&=\int_0^\infty \frac{\sqrt{2(K+1)}}{\mu \sqrt{\alpha}}  \frac{e^{-\frac{s_b^2 + \frac{\alpha KM}{K+1}}{\frac{\mu^2 \alpha}{K+1}}} \left( \frac{s_b}{\frac{\mu}{\sqrt{2}}\sqrt{\frac{\alpha}{K+1}}} \right)^{2M} \frac{\sqrt{2 KM}}{\mu}}{\left( \frac{\sqrt{4 K(K+1)M}}{\mu^2 \sqrt{\alpha}} s_b \right)^M}\\
&~~~~~~~~~I_{M-1} \left(  \frac{\sqrt{4 K(K+1)M}}{\mu^2 \sqrt{\alpha}} s_b \right) \\
&~~~~~~~~~\prod_{k\in\mathbb{K}_b}\frac{K+1}{\alpha(1-\mu^2)} e^{-\frac{s_k + s_b^2}{\frac{\alpha(1-\mu^2)}{K+1}}} \left(\frac{s_k}{s_b^2}\right)^{\frac{M-1}{2}}\\
&~~~~~~~~~~~~~~~~I_{M-1} \left( \frac{2s_b (K+1)\sqrt {s_k}}{\alpha(1-\mu^2)}\right)ds_b,
\end{aligned}
\end{equation}
and since the ports (indices) from different block has zero-correlation by assumption, the PDF of $\{B_k^2\}_{k=1}^N$ is given by
\begin{equation}
\label{pdfbktotal}
\begin{aligned}
&f_{\{B_k^2 \}_{k=1}^N} \left(\left\{s_k\right\}_{k=1}^N\right)\\
&=\prod_{b=1}^B\int_0^\infty \frac{\sqrt{2(K+1)}}{\mu \sqrt{\alpha}}  \frac{e^{-\frac{s_b^2 + \frac{\alpha KM}{K+1}}{\frac{\mu^2 \alpha}{K+1}}} \left( \frac{s_b}{\frac{\mu}{\sqrt{2}}\sqrt{\frac{\alpha}{K+1}}} \right)^{2M} \frac{\sqrt{2 KM}}{\mu}}{\left( \frac{\sqrt{4 K(K+1)M}}{\mu^2 \sqrt{\alpha}} s_b \right)^M}\\
&~~~~~~~~~~~~~I_{M-1} \left(  \frac{\sqrt{4 K(K+1)M}}{\mu^2 \sqrt{\alpha}} s_b \right) \\
&~~~~~~~~~~~~~\prod_{k\in\mathbb{K}_b}\frac{K+1}{\alpha(1-\mu^2)} e^{-\frac{s_k + s_b^2}{\frac{\alpha(1-\mu^2)}{K+1}}} \left(\frac{s_k}{s_b^2}\right)^{\frac{M-1}{2}}\\
&~~~~~~~~~~~~~~~~~~~~I_{M-1} \left( \frac{2s_b (K+1)\sqrt {s_k}}{\alpha(1-\mu^2)}\right)ds_b,
\end{aligned}
\end{equation}
\subsection{Obtaining ${P}_{\mathrm{out}}$}
By applying~(\ref{pdfbktotal}) and~(\ref{CDFAk}) to~(\ref{outint}), ${P}_{\mathrm{out}}$ is given by~(\ref{pout}). 
\begin{figure*}
\begin{equation}
\label{pout}
\begin{aligned}
{P}_{\mathrm{out}}=&\int_0^{\infty}\cdots\int_0^{\infty}\prod_{b=1}^B \int_0^\infty \int_0^\infty\frac{2r_b }{{\bar\sigma}^2\mu^2}e^{-\frac{r_b^2+|\eta|^2}{{\bar\sigma}^2\mu^2}} I_0 \left(\frac{2r_b |\eta|}{{\bar\sigma}^2 \mu^2 }\right)\frac{\sqrt{2(K+1)}}{\mu \sqrt{\alpha}}  \frac{e^{-\frac{s_b^2 + \frac{\alpha KM}{K+1}}{\frac{\mu^2 \alpha}{K+1}}} \left( \frac{s_b}{\frac{\mu}{\sqrt{2}}\sqrt{\frac{\alpha}{K+1}}} \right)^{2M} \frac{\sqrt{2 KM}}{\mu}}{\left( \frac{\sqrt{4 K(K+1)M}}{\mu^2 \sqrt{\alpha}} s_b \right)^M}\\
& I_{M-1} \left(  \frac{\sqrt{4 K(K+1)M}}{\mu^2 \sqrt{\alpha}} s_b \right)\prod_{k\in\mathbb{K}_b} \Biggl(1-Q_1 \Biggl(\sqrt{\frac{2}{{\bar\sigma}^2(1-\mu^2)}}r_b, \sqrt{\frac{2}{{\bar\sigma}^2(1-\mu^2)}}\sqrt{p_1s_k +p_2 }\Biggr)\Biggr) \\
&~~~~~~~~~~~~~~~~~~~~~~~~~~~~~~~~~~~~~~~~ \frac{K+1}{\alpha(1-\mu^2)} e^{-\frac{s_k + s_b^2}{\frac{\alpha(1-\mu^2)}{K+1}}} \left(\frac{s_k}{s_b^2}\right)^{\frac{M-1}{2}} I_{M-1} \left( \frac{2s_b (K+1)\sqrt {s_k}}{\alpha(1-\mu^2)}\right)dr_bds_bdS.
\end{aligned}
\end{equation}
\hrule
\end{figure*}
\section{Upper-Bound of~${P}_{\mathrm{out}}$}
As shown in~\cite{newcor}, when $\{L_b\}_{b=1}^B$ is fixed, ${P}_{\mathrm{out}}$ calculated based on the independent antennas equivalent (IAE) model by letting $\mu=1$, which is denoted by $\hat{{P}}_{\mathrm{out}}$, provides an upper-bound of ${P}_{\mathrm{out}}$ derived using the BDMA model in~(\ref{pout}). In detail, when $\mu=1$, it is clear that~(\ref{parcor}) becomes
\begin{equation}
\label{outint22}
\begin{aligned}
\hat{{P}}_{\mathrm{out}}=\int_0^\infty \cdots \int_0^\infty F_{\{A_b\}_{b=1}^B} \left(\{ \sqrt {p_1 s_b +p_2 } \}_{b=1}^B\right)\\ f_{\{B_b^2\}_{b=1}^B} \left(\{s_b\}_{b=1}^B\right) ds_1 \cdots ds_B.
\end{aligned}
\end{equation}
Here, the joint CDF of $\{A_b\}_{b=1}^B$ with substitution of $\{\sqrt{p_1 s_b +p_2 }\}_{b=1}^B$ is given by~\cite{FASRIS}
\begin{equation}
\label{abcdf}
\begin{aligned}
&F_{\{A_b\}_{b=1}^B} \left(\{\sqrt{p_1 s_b +p_2 }\}_{b=1}^B\right)\\
&=\prod_{b=1}^{B} \left[1 - Q_1\left( \sqrt{\frac{2}{\bar{\sigma}^2}} |\eta|, \sqrt{\frac{2(p_1 s_b +p_2)}{\bar{\sigma}^2}} \right) \right],
\end{aligned}
\end{equation}
which is due to the fact that
\begin{equation}
\label{abdist}
A_b=|\mathbf{h}_b^* \mathbf{\Phi} \mathbf{g}| = \left| \sqrt{\frac{\alpha K}{K+1}} \bar{\mathbf{h}}^*\mathbf{\Phi} \mathbf{g} + \tilde{\mathbf{h}}_b^*\mathbf{\Phi} \mathbf{g}\right|
\end{equation}
follows the Rician distribution by definition. For $f_{\{B_b^2\}_{b=1}^B} \left(\{s_b\}_{b=1}^B\right)$, it is clear that
\begin{equation}
\label{defbb}
\begin{aligned}
B_b ^2 &=\left| \left|\tilde{\pmb{\eta}}' + \mu \left[ \tilde{h}_b^{(m)} \right]_{m=1}^M\left(= \bar{\pmb{\Lambda}}_b'\right) \right|\right|_2^2\\
&=C^{'2}\left|\left| [\bar{\tilde{h}}_b^{(m)}]_{m=1}^M \right|\right|_2^2~(\mu =1),
\end{aligned}
\end{equation}
and by~(\ref{cn22}), 
we can conclude that $\{B_b^2\}_{b=1}^B$ are independent and each $B_b^2$ follow a non-central chi-squared distribution scaled by $C^{'2}\left(= \frac{\alpha}{2(K+1)}\right)$, DoF $2M$ and non-centrality parameter $ \frac{1}{C^{'2}} {\sum_{m=1}^M \left|\tilde{\pmb{\eta}}^{'(m)}\right|^2} =\frac{|\tilde{\pmb{\eta}}'|^2} {C^{'2}}=\lambda^{'2}=2 KM$. Hence, $f_{\{B_b^2\}_{b=1}^B} \left(\{s_b\}_{b=1}^B\right)$ can be evaluated by
 \begin{equation}
 \label{pdfbb}
 \begin{aligned}
f_{\{B_b^2\}_{b=1}^B} \left(\{s_b\}_{b=1}^B\right)=\prod_{b=1}^B& \frac{K+1}{\alpha} e^{-\frac{s_b +\frac{\alpha KM}{K+1}}{\frac{\alpha}{K+1}}} \left( \frac{s_b}{\frac{\alpha KM}{K+1}} \right)^{\frac{M-1}{2}}\\
&I_{M-1} \left( \sqrt{ \frac{4K(K+1)M s_b}{\alpha} } \right).
\end{aligned}
\end{equation}
Therefore, by applying~(\ref{abcdf}) and~(\ref{pdfbb}) to~(\ref{outint22}), the upper-bound $\hat{{P}}_{\mathrm{out}}$ is given by~(\ref{poutupp}). 
{It is worth emphasizing that, although $\hat{{P}}_{\mathrm{out}}$ in~\eqref{poutupp} is derived under the IAE assumption with $\mu=1$, its role in this work is fundamentally different from existing FAS-RIS studies such as~\cite{FASRIS}. In particular, the considered FAS-ARIS system involves a substantially more intricate statistical structure due to the presence of ARIS amplification noise and its coupling with spatially correlated FAS ports, which makes the outage analysis significantly more challenging than in conventional FAS-RIS settings. Within this more complex signal and noise model, $\hat{{P}}_{\mathrm{out}}$ in~\eqref{poutupp} is not introduced as a standalone inequality, but as a key component of the proposed system-level analytical framework that enables a tractable characterization of the outage probability. Furthermore, it serves as an analytical benchmark that directly enables the derivation of the near-optimal rate solution and the subsequent throughput optimization in Section~\ref{thput} without exhaustive search. This benchmark-driven analytical design framework is specific to the considered FAS-ARIS architecture and does not arise in previous analyses.} Furthermore, in Section~\ref{numr}, we will numerically demonstrate that $\hat{{P}}_{\mathrm{out}}$ closely approximates the actual ${P}_{\mathrm{out}}$, validating the tightness and practical relevance of the IAE-based upper-bound.
\begin{figure*}
\begin{equation}
\label{poutupp}
\begin{aligned}
\hat{{P}}_{\mathrm{out}}=&\int_0^\infty \cdots \int_0^\infty \prod_{b=1}^{B} \left[1 - Q_1\left( \sqrt{\frac{2}{\bar{\sigma}^2}} |\eta|, \sqrt{\frac{2(p_1 s_b +p_2)}{\bar{\sigma}^2}} \right) \right]\frac{K+1}{\alpha} e^{-\frac{s_b +\frac{\alpha KM}{K+1}}{\frac{\alpha}{K+1}}} \left( \frac{s_b}{\frac{\alpha KM}{K+1}} \right)^{\frac{M-1}{2}} I_{M-1} \left( \sqrt{  \frac{4K(K+1)M s_b}{\alpha}} \right) dS\\
=&\left(\int_0^\infty \left[1 - Q_1\left( \sqrt{\frac{2}{\bar{\sigma}^2}} |\eta|, \sqrt{\frac{2(p_1 s +p_2)}{\bar{\sigma}^2}} \right) \right]\underbrace{\frac{K+1}{\alpha} e^{-\frac{s +\frac{\alpha KM}{K+1}}{\frac{\alpha}{K+1}}} \left( \frac{s}{\frac{\alpha KM}{K+1}} \right)^{\frac{M-1}{2}} I_{M-1} \left( \sqrt{  \frac{4K(K+1)M s}{\alpha}} \right)}_{\triangleq \bar{g}(s)}ds \right)^B
\end{aligned}
\end{equation}
\hrule
\end{figure*}
\section{Outage Probability Optimization}
\label{opop}
In this section, we aim to minimize $P_{\mathrm{out}}$ by optimizing $\mathbf{\Phi}$, formulated as follows:
\begin{equation}
\label{outopt}
\min_{\mathbf{\Phi}}~{P}_{\mathrm{out}}~\mathrm{s.t.}~\theta_m\in[0,2\pi)~(\forall m).
\end{equation}
We note that $|\eta|$ is the only term involving $\mathbf{\Phi}$, which implies that minimizing ${P}_{\mathrm{out}}$ with respect to $\mathbf{\Phi}$ reduces to minimizing it with respect to $|\eta|$. However, since the integrand of ${P}_{\mathrm{out}}$ in~(\ref{pout}) includes the following term:
\begin{equation}
\label{etaout}
u(r_b, t) \triangleq e^{-\frac{Bt^2}{{\bar\sigma}^2\mu^2}} \prod_{b=1}^B I_0\left(\frac{2r_b t}{{\bar\sigma}^2 \mu^2}\right)~(|\eta|=t),
\end{equation}
within a total of $N+2B$ integrals, it is analytically intractable to optimize ${P}_{\mathrm{out}}$ directly with respect to $t$, especially due to $I_0$ which intricates the integral variable $r_b$ and $t$ inside. {To gain insight into its behavior, we consider the monotonicity of ${P}_{\mathrm{out}}$ by leveraging first-order stochastic dominance as a function of $t$. It provides a comprehensive and rigorous characterization of the dependence of ${P}_{\mathrm{out}}$ on $t$, enabling analytical insight despite the intractability of the exact expression.}

{
\begin{theorem}
\label{thmnd}
$P_{\mathrm{out}}(t)$ is monotonically non-increasing in $t$.
\end{theorem}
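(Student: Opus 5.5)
The plan is to prove the claim by first-order stochastic dominance, without touching the integral (\ref{pout}) directly. The key observation is that the parameter $t=|\eta|$ enters the joint law of $(\{A_k\},\{B_k^2\})$ only through the Rician variables $\Lambda_b=|\eta+\mu\tilde{\mathbf h}_b\boldsymbol\Phi\mathbf g^*|$, $b=1,\dots,B$. The family $\{B_k^2\}$ depends on $\bar{\mathbf h}$ and the NLoS terms but not on $\mathbf\Phi$, since $B_k=\|\mathbf h_k\|_2$. First I will rewrite $P_{\mathrm{out}}$ as an expectation over $\{\Lambda_b\}$ and the $B$-block variables. Conditioned on $\Lambda_b=r_b$ and on $\{B_k^2=s_k\}$, the per-block outage factor is
\begin{equation*}
G_b(r_b;\{s_k\}_{k\in\mathbb K_b})\triangleq\prod_{k\in\mathbb K_b}\Bigl[1-Q_1\Bigl(\sqrt{\tfrac{2}{\bar\sigma^2(1-\mu^2)}}\,r_b,\sqrt{\tfrac{2(p_1s_k+p_2)}{\bar\sigma^2(1-\mu^2)}}\Bigr)\Bigr],
\end{equation*}
exactly as in (\ref{CDFAk}). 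Here I will take care to justify that, conditionally on $\Lambda_b$, the residual terms $\sqrt{1-\mu^2}\,\mathbf e_k^*\boldsymbol\Phi\mathbf g$ are circularly symmetric. This makes $A_k\,|\,\Lambda_b$ Rician with noncentrality $r_b$, regardless of the phase. It also lets the dependence between $A_k$ and $B_k^2$ be handled by the same conditioning used to obtain (\ref{pout}). With this, $P_{\mathrm{out}}(t)=\mathbb E\bigl[\prod_b G_b(\Lambda_b;\cdot)\bigr]$, where the $\Lambda_b$ are i.i.d. Rician with noncentrality $t$ and scale $\bar\sigma^2\mu^2$, independent across blocks.

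Second, I will establish two monotonicity facts.
\begin{enumerate*}[label=(\roman*)]
\item The Marcum function $Q_1(a,c)$ is strictly increasing in its first argument $a$. This is standard: $Q_1(a,c)=\mathbb P[|a+Z|>c]$ for a complex Gaussian $Z$, and the noncentral chi law is stochastically increasing in its noncentrality. Hence each factor $1-Q_1(\cdot)$ lies in $[0,1]$ and is non-increasing in $r_b$, so $G_b$ is non-increasing in $r_b$.
\item The Rician family is stochastically increasing in $t$. The CDF of $\Lambda_b$ is $1-Q_1(\sqrt2 t/(\bar\sigma\mu),\sqrt2 r/(\bar\sigma\mu))$, which is non-increasing in $t$ by (i). Equivalently, $\Lambda_b(t)\le_{\mathrm{st}}\Lambda_b(t')$ for $t\le t'$.
\end{enumerate*}

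Third, I will combine these. For $t\le t'$ I will construct a monotone coupling, via quantile transforms $\Lambda_b(t)=F_t^{-1}(U_b)$, with $\Lambda_b(t)\le\Lambda_b(t')$ almost surely and independent $U_b$. I will keep the $B$-variables fixed in the coupling, which is legitimate because their law does not depend on $t$. Since the product of non-negative non-increasing functions is non-increasing, $\prod_b G_b(\Lambda_b(t'))\le\prod_b G_b(\Lambda_b(t))$ pointwise. Taking expectations yields $P_{\mathrm{out}}(t')\le P_{\mathrm{out}}(t)$, which is the claim.

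The main obstacle is the first step: the conditional factorization when the $B$-variables are present. The representation (\ref{pout}) integrates the $r_b$ and $s_b$ kernels separately. I must therefore check that, in the joint law, the conditional distribution of $\{B_k^2\}$ given $\Lambda_b$ does not depend on $t$. Otherwise the coupling argument would have to move the $s$-variables as well. If that independence fails, I would fall back on working directly with the integrand $u(r_b,t)$ in (\ref{etaout}). The idea is to use the fact that $r\mapsto I_0(2rt/c)e^{-t^2/c}\,r e^{-r^2/c}$ has a monotone likelihood ratio in $(r,t)$, because $I_0(xy)$ is totally positive of order 2. That property still gives stochastic ordering in $t$ for any fixed $s$-configuration, and the same monotone-integrand conclusion follows.
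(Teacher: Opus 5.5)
Your proposal is correct and follows the paper's proof: the conditional block outage is monotone in $\Lambda_b$ because $Q_1$ increases in its first argument, the Rician $\Lambda_b(t)$ is first-order stochastically increasing in $t$ (read off from its Marcum-$Q$ CDF), and the blocks are independent. It is a bit more careful than the paper's version, which replaces the random thresholds $p_1 s_k+p_2$ by a fixed $\gamma_{\mathrm{th}}$ and asserts the multivariate step directly; you keep the thresholds, which is legitimate because in (\ref{pout}) $t$ enters only through the $r_b$-kernel and the $s$-kernels factor out, so your ``main obstacle'' check is immediate, and you justify the last step with a quantile coupling.
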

\begin{proof}
\textbf{Step 1 (Conditional monotonicity in $\Lambda_b$):} Based on the conditional formulation in Section~\ref{sec3} and the definition of $\Lambda_b$, the outage probability can be expressed as
\begin{equation}
\label{pot}
P_{\mathrm{out}}(t)=\mathbb{E}_{\{\Lambda_b(t)\}}
\left[
\prod_{b=1}^{B} \phi_b(\Lambda_b(t))
\right],
\end{equation}
where the conditional outage term $\phi_b(\Lambda_b)$ is given by
\begin{equation}
\label{defphibb}
\begin{aligned}
\phi_b(\Lambda_b)
&\triangleq\mathbb{P}\left[A_k \le \sqrt{\gamma_{\mathrm{th}}}, \forall k\in\mathcal{K}_b \big| \Lambda_b\right]\\
&=\left[1-Q_1\left(
\sqrt{\frac{2}{\bar{\sigma}^2(1-\mu^2)}}\Lambda_b,
\sqrt{\frac{2\gamma_{\mathrm{th}}}{\bar{\sigma}^2(1-\mu^2)}}
\right)\right]^{L_b},
\end{aligned}
\end{equation}
since each $\Lambda_b(t)$ follows a Rician distribution whose noncentrality parameter is $t$~\cite{FASRIS}. Thereafter, because $1-Q_1(a,b)$ is monotonically decreasing in $a$, $\phi_b(\Lambda_b)$ is monotonically decreasing in $\Lambda_b$. Consequently, the product
\begin{equation}
\label{phiprod}
\Phi(\boldsymbol{\lambda})
\triangleq
\prod_{b=1}^{B} \phi_b(\lambda_b)
\end{equation}
is componentwise decreasing in $\boldsymbol{\lambda}=[\lambda_1\cdots \lambda_B]^{\mathrm T}$.\\
\textbf{Step 2 (Stochastic ordering of $\Lambda_b(t)$ in $t$):} We recall the definition of first-order stochastic dominance~\cite{sd}. For two random variables $X$ and $Y$, $X \succeq_{\mathrm{FOSD}} Y$ if
\begin{equation}
\label{fxfy}
F_X(x) \le F_Y(x)~(\forall x),
\end{equation}
or equivalently,
\begin{equation}
\label{egey}
\mathbb{E}[g(X)] \ge \mathbb{E}[g(Y)]~(\forall~\text{increasing functions}~g).
\end{equation}
The CDF of $\Lambda_b(t)$ is given by
\begin{equation}
\label{lcdf}
F_{\Lambda_b}(x;t)=1-Q_1\left(\frac{t}{\sigma},\frac{x}{\sigma}\right),
\end{equation}
which implies that for any $t_2 \ge t_1$,
\begin{equation}
\label{t2t1}
\begin{aligned}
&Q_1\left(\frac{t_2}{\sigma},\frac{x}{\sigma}\right)\ge Q_1\left(\frac{t_1}{\sigma},\frac{x}{\sigma}\right)\leftrightarrow F_{\Lambda_b}(x;t_2) \le F_{\Lambda_b}(x;t_1)~(\forall x).
\end{aligned}
\end{equation}
Hence, $\Lambda_b(t_2) \succeq_{\mathrm{FOSD}} \Lambda_b(t_1)$.\\
\textbf{Step 3 (Conclusion):}
Since $\Phi(\boldsymbol{\lambda})$ is decreasing in each component and $\{\Lambda_b(t)\}$ are independent across $b$, the stochastic dominance established in Step~2 implies
\begin{equation}
\label{t2t1s3}
t_2 \ge t_1~\rightarrow~\mathbb{E}\left[\Phi(\{\Lambda_b(t_2)\})\right]\le\mathbb{E}\left[\Phi(\{\Lambda_b(t_1)\})\right].
\end{equation}
Therefore, $P_{\mathrm{out}}(t)$ is monotonically non-increasing in $t$, which completes the proof.
\end{proof}
\begin{remark}
\label{monohat}
We can observe same effect in~(\ref{poutupp}), since the term $1 - Q_1\left( \sqrt{\frac{2}{\bar{\sigma}^2}} |\eta|, \sqrt{\frac{2(p_1 s_b +p_2)}{\bar{\sigma}^2}} \right)$ also diminishes when $|\eta|$ increases~\cite{marQ}, leads to the decrease of $\hat{{P}}_{\mathrm{out}}$.
\end{remark}
}

This property enables us to reformulate Problem~(\ref{outopt}) into an equivalent problem that can be solved more efficiently:
\begin{equation}
\label{outopt2}
\max_{\boldsymbol\Phi}~|\bar{\mathbf{h}}^*\mathbf{\Phi} \mathbf{g}|~\mathrm{s.t.}~\theta_m\in[0,2\pi)~(\forall m).
\end{equation}
It is clear that the optimal solution $\mathbf{\Phi}^\star\triangleq\mathrm{diag}\left(\{e^{j\theta_m^\star}\}_{m=1}^M\right)$ of Problem~(\ref{outopt2}) is obtained with
\begin{equation}
\label{optimalphi}
\theta_m^\star=\angle \bar{\mathbf{h}}^{(m)}-\angle {\mathbf{g}}^{(m)},~(m=1,\cdots,M).
\end{equation}
Furthermore, the following also holds under~(\ref{optimalphi}):
\begin{equation}
\label{aeta}
\begin{aligned}
\sqrt{\frac{2}{{\bar\sigma}^2}} |\eta|
&=\sqrt{\frac{2K}{M}} \sum_{m=1}^M |\bar{\mathbf{h}}^{*(m)}| |\bar{\mathbf{g}}^{(m)}|=\sqrt{2KM},
\end{aligned}
\end{equation}
which will be utilized in the throughput optimization process under the optimal phase alignment.
\begin{remark}
\label{phasenov}
{Although~\eqref{optimalphi} admit a simple phase-alignment form, they are not introduced heuristically nor borrowed from conventional RIS designs. Instead, they arise as the direct outcome of a rigorous outage probability minimization formulated for the considered FAS-ARIS system. In particular, from the raw outage expression in~\eqref{pout}, the optimal phase structure does not emerge trivially due to the coupled effects of FAS spatial correlation and ARIS amplification noise. The key analytical enabler is the monotonic relationship between $P_{\mathrm{out}}$ and $t$ established by Theorem~\ref{thmnd}, which allows the originally intractable outage minimization problem to be reformulated into a tractable maximization of $t$. This result formally connects outage probability minimization with the resulting phase-alignment structure and reveals that the simple alignment rule in~\eqref{optimalphi} is in fact the optimal solution of the underlying outage-oriented design problem for the FAS-ARIS architecture.}
\end{remark}
\section{Throughput Optimization}\label{thput}
{According to~(\ref{outage}), $P_{\mathrm{out}}(R)$ increases monotonically with respect to $R$. This monotonicity reveals a fundamental tradeoff between spectral efficiency and reliability: increasing $R$ enhances the nominal data rate but reduces the non-outage probability $1-P_{\mathrm{out}}(R)$, thereby degrading coverage performance. To explicitly capture this tradeoff, we have defined the throughput of the MU $T(R)$ in~\eqref{Tdef} as the effective throughput to be maximized, which measures the average successfully delivered information rate rather than the instantaneous capacity. 
Maximizing $T(R)$ therefore corresponds to selecting $R$ that best balances rate and reliability.
}

{However, $P_{\mathrm{out}}$ in~(\ref{pout}) involves multiple integral terms and is computationally intractable for direct optimization. To enable a practical and low-complexity throughput optimization, we adopt $\hat{P}_{\mathrm{out}}$ in~(\ref{poutupp}), which provides a tight and analytically tractable surrogate for $P_{\mathrm{out}}$~\cite{FASRIS, FASRIS22, newcor}. Accordingly, the throughput optimization problem is formulated as
\begin{equation}
\label{thpopt}
\max_{R}~ T(R) = R\big(1-\hat{P}_{\mathrm{out}}(R)\big)~ \mathrm{s.t.} ~ R \in [R_{\min}, R_{\max}].
\end{equation}}
It remains challenging to directly solve Problem~(\ref{thpopt}) due to the complex structure of $T$, which involves intricate components such as $1 - Q_1$ and $I_{M-1}$ within the integral. We, therefore, first substitute $R=\log_2 (1+x)$ such that $\check{T} \triangleq T(\log_2 (1 + x))$ to simplify the expression. Thereafter, we divide the domain into three subintervals 1) $x\le \Lambda_0$, 2) $x\ge\Lambda_1$, and 3) $x\in[\Lambda_0, \Lambda_1]$ based on the fact that different asymptotic approximations are valid in the low-, high-, and intermediate-regimes of $x$, enabling tractable and accurate characterization of the integral in each region. We then identify $x$ and corresponding $R$ that maximizes $\check{T}$ within each interval. Among these candidates, we then select the one that yields the global maximum of $T$.
\subsubsection{$x\le\Lambda_0$}
To determine $\Lambda_0$ first, we consider Theorem~\ref{quasi}.
\begin{theorem}
\label{quasi}
Let $R = \log_2 (1 + x)$ be substituted into~(\ref{Tdef}). Then, the resulting function $\check{T} \triangleq T(\log_2 (1 + x))$ is concave over the interval $x \in [0, \Lambda_0)$, where
\begin{equation}
\label{ld}
\Lambda_0 \triangleq \frac{1}{\bar{p}_1 U + \bar{p}_2} \left( \frac{1}{a} \psi^{-1}\left( \frac{1}{a^2} \right) \right)^2,
\end{equation}
$\psi^{-1}$ denotes the inverse of $\psi$~\cite{bessbook}, and $U > 0$ is a sufficiently large constant.
\end{theorem}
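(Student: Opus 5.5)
We will work with the surrogate $\hat{P}_{\mathrm{out}}$ in~(\ref{poutupp}) under the optimal phase~(\ref{optimalphi}), so that $\sqrt{2/\bar\sigma^2}|\eta|=\sqrt{2KM}\triangleq a$. After the substitution $R=\log_2(1+x)$ we have $p_1=x\sigma^2/P\triangleq x\bar p_1$ and $p_2=x\sigma_0^2/(P\rho^{*2})\triangleq x\bar p_2$. Hence
\begin{equation*}
\check{T}(x)=\log_2(1+x)\bigl(1-G(x)^B\bigr),\qquad G(x)\triangleq\int_0^\infty\Bigl[1-Q_1\Bigl(a,\sqrt{\tfrac{2x(\bar p_1 s+\bar p_2)}{\bar\sigma^2}}\Bigr)\Bigr]\bar g(s)\,ds .
\end{equation*}
The plan has three steps.

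\textbf{Step 1: truncate the $s$-integral.} Since $\bar g$ is a (scaled) noncentral chi-squared density, its tail beyond a large $U$ is negligible. We therefore replace $s$ by the worst case $U$ on the bulk $[0,U]$. Because $1-Q_1$ is increasing in its second argument, this gives a uniform handle on the second argument $b(x,s)$: for all relevant $s$,
\begin{equation*}
b(x,s)\le b_U(x)\triangleq\sqrt{2x(\bar p_1U+\bar p_2)/\bar\sigma^2}.
\end{equation*}

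\textbf{Step 2: concavity of each Marcum factor in $x$.} For fixed $s$, write $\varphi(x)=1-Q_1(a,c\sqrt{x})$ with $c^2=2(\bar p_1 s+\bar p_2)/\bar\sigma^2$. Using $\partial_b Q_1(a,b)=-b\,e^{-(a^2+b^2)/2}I_0(ab)$, we get
\begin{equation*}
\varphi'(x)=\tfrac{c^2}{2}\,e^{-(a^2+c^2x)/2}\,I_0(ac\sqrt{x}).
\end{equation*}
Differentiating again,
\begin{equation*}
\varphi''(x)\propto -\tfrac{c^2}{2}I_0(z)+\tfrac{ac}{2\sqrt{x}}I_1(z),\qquad z=ac\sqrt{x}.
\end{equation*}
Thus $\varphi''\le0$ if and only if $\frac{I_1(z)}{z\,I_0(z)}\le\frac{1}{a^2}$. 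Defining $\psi(z)\triangleq I_1(z)/(zI_0(z))$ (decreasing from $1/2$, as in~\cite{bessbook}), the condition becomes $z\ge\psi^{-1}(1/a^2)$, or the reverse inequality depending on the monotone direction of $\psi$ actually used.

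Matching the stated form $\Lambda_0=\frac{1}{\bar p_1U+\bar p_2}\bigl(\frac1a\psi^{-1}(1/a^2)\bigr)^2$, the intended region is $z=ac\sqrt x\le\psi^{-1}(1/a^2)$. I therefore expect $\psi$ in~\cite{bessbook} to be the inverse-direction ratio, with the regime of interest being that where the Rician CDF is convex/concave in $x$. I will fix the convention so that the condition is $c\sqrt{x}\le\frac1a\psi^{-1}(1/a^2)$. Taking the worst case $c=c_U$ from Step 1 yields exactly $x<\Lambda_0$, up to the constant $2/\bar\sigma^2$, which I will absorb into $\bar p_i$. The net conclusion of this step is that, for $x<\Lambda_0$, the outage factor $G$ has a definite curvature uniformly in $s$. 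Since $\bar g\ge0$, integration preserves this, and $G$ inherits the same curvature.

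\textbf{Step 3: assemble concavity of $\check T$.} Write $\check T=f\cdot h$ with $f=\log_2(1+x)$, which is increasing, concave and nonnegative, and $h=1-G^B$, which is nonnegative and decreasing. Then
\begin{equation*}
\check T''=f''h+2f'h'+fh''.
\end{equation*}
The first two terms are $\le0$ automatically. The third requires $h''\le0$, i.e.\ $(G^B)''\ge0$. Since $(G^B)''=BG^{B-1}G''+B(B-1)G^{B-2}G'^2$ and the second term is nonnegative, it suffices that $G''\ge0$ on $[0,\Lambda_0)$. This means the Marcum-CDF factor is convex in $x$ there, which fixes the sign direction in Step 2.

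\textbf{Main obstacle.} The hardest step is Step 2: getting the correct direction of the Bessel-ratio inequality and justifying the uniform-in-$s$ replacement by $U$. The truncation of the $\bar g$-tail must be controlled, either by arguing that the contribution from $s>U$ is negligible or by treating $U$ as defining the approximation. I would first try to verify the sign via the known monotonicity of $I_1(z)/I_0(z)$ and small-$z$ expansions, $\psi(z)\to1/2$, which gives convexity near $x=0$ when $a^2>2$.
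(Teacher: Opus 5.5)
Your proposal is correct and follows essentially the same route as the paper's Appendix A. You expand $\check{T}''$ by the product rule and note that $f''h$, $2f'h'$ and the $-B(B-1)G^{B-2}G'^2$ part are nonpositive. You then reduce the remaining term to pointwise nonnegativity of the Marcum second derivative (the paper's $o(s,x)\ge 0$, i.e.\ $\psi(z)\ge 1/a^2$), and take the worst case $s=U$ after truncating the $s$-integral, exactly as the paper does.

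The only wobble is the detour in your Step~2: no reversed convention for $\psi$ is needed. With $\psi(z)=I_1(z)/(zI_0(z))$ decreasing, the condition Step~3 actually requires is convexity $\varphi''\ge 0$ of each Marcum CDF factor. That condition is exactly $z\le\psi^{-1}(1/a^2)$, which is what the paper uses to obtain $\Lambda_0=\Lambda(U)$.
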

		\begin{proof}
See Appendix A.
\end{proof}
Based on Theorem~\ref{quasi}, to find $x^\star$ that maximizes $\check{T}$, we consider the following cases:
\begin{enumerate}
\item If $\frac{\partial \check{T}}{\partial x}|_{x=\Lambda_0} \ge 0$, then $x^\star \in [0, \Lambda_0]$ is $x^\star = \Lambda_0$. This is because $\check{T}$ is concave on $[0, \Lambda_0]$, and for any $x' \in [0, \Lambda_0]$, we have $\frac{\partial \check{T}}{\partial x}|_{x = x'} \ge \frac{\partial \check{T}}{\partial x}|_{x = \Lambda_0} \ge 0$, which implies that $\check{T}$ is monotonically increasing over the interval.
\item If $\frac{\partial \check{T}}{\partial x}|_{x = \Lambda_0} \le 0$, then $x^\star \in [0, \Lambda_0]$ is the solution to $\frac{\partial \check{T}}{\partial x} = 0$. Since $\check{T}$ is concave in this range, the solution is unique and can be efficiently obtained using the gradient ascent method~\cite{boyd}.
\end{enumerate}
The corresponding $R^\star$ that maximizes $\check{T}$ is then given by
\begin{equation}
\label{Rxstar}
R^\star = \log_2 (1+x^\star).
\end{equation} 
\begin{remark}
\label{r000}
Since $\frac{\partial \check{T}}{\partial x}=\frac{\partial \check{T}}{\partial R} \frac{\partial R}{\partial x}$ and $\frac{\partial R}{\partial x} = \frac{1}{\ln 2 (1+x)}$, we can check that $\frac{\partial \check{T}}{\partial x}$ and $\frac{\partial \check{T}}{\partial R}$ have same sign.
\end{remark}
\subsubsection{$x\ge\Lambda_1$}
For $x\ge \Lambda_0$, by approximation $(1-t)^B\approx 1-Bt$ for $t\ll1$, $\check{T}$ becomes
\begin{equation}
\label{happx}
\check{T} \approx B \log_2 (1+x) \int_0^\infty Q_1 \left(a, \sqrt{(\bar{p}_1 s+\bar{p}_2)x}\right)\bar{g}(s)ds,
\end{equation}
and to reduce the search space, we further narrow the range by determining $\Lambda_1$ according to following Theorem~\ref{max22}.
\begin{theorem}
	\label{max22}
	$\check{T}$ is either quasiconcave or monotonically decreasing function in $x\ge \Lambda_1$, where
	\begin{equation}
	\label{l0}
	\Lambda_1\triangleq\Lambda(0)=\frac{1}{\bar{p}_2} \left(\frac{1}{a}\psi^{-1}\left(\frac{1}{a^2}\right)\right)^2.
	\end{equation}
		\end{theorem}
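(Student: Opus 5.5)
The plan is to work with the high-$x$ surrogate~(\ref{happx}),
\[
\check{T}(x)\approx B\log_2(1+x)\,G(x),\qquad G(x)\triangleq\int_0^\infty Q_1\!\left(a,\sqrt{(\bar{p}_1 s+\bar{p}_2)x}\right)\bar{g}(s)\,ds .
\]
Here $a=\sqrt{2KM}$ by~(\ref{aeta}), and $\bar{p}_1=p_1/(2^R-1)\cdot 2/\bar\sigma^2$, $\bar{p}_2$ are the normalized constants of Appendix A. The key fact is that $\check{T}$ is the product of the increasing, concave, positive function $\log_2(1+x)$ and the decreasing, positive function $G$.

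\textbf{Step 1 (log-concavity of $Q_1$ in its second argument).} I would use the Appendix A machinery, namely the function $\psi$ and its inverse. The claim to establish is that $Q_1(a,\sqrt{cx})$ is log-concave in $x$ on the region where $\sqrt{cx}\ge \frac{1}{a}\psi^{-1}(1/a^2)$. Writing $b=\sqrt{cx}$, the derivative $\partial Q_1/\partial b=-b\,e^{-(a^2+b^2)/2}I_0(ab)$ lets the second log-derivative be expressed through the Bessel ratio $I_1(ab)/I_0(ab)$. The threshold $\psi^{-1}(1/a^2)$ is exactly where the sign condition flips. Because $c=\bar{p}_1 s+\bar{p}_2\ge \bar{p}_2$ for every $s\ge0$, the condition $x\ge\Lambda_1=\Lambda(0)$ makes it hold simultaneously for all $s$. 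This is why $\Lambda_1$ uses $\bar{p}_2$ alone, i.e.\ $\Lambda(U)$ evaluated at $U=0$.

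\textbf{Step 2 (transfer to $G$ and to $\check{T}$).} $G$ is a mixture over $s$ with weight $\bar{g}(s)$ of functions that are log-concave on $x\ge\Lambda_1$. Mixtures do not preserve log-concavity in general. I would therefore aim for the weaker property of \emph{quasiconcavity} of the product, which is all the theorem asserts. It suffices that $\frac{d}{dx}\log\check{T}=\frac{1}{(1+x)\ln(1+x)}+\frac{G'(x)}{G(x)}$ changes sign at most once, from $+$ to $-$. The first term is decreasing in $x$. For the second, I would show that $-G'/G$ is non-decreasing on $[\Lambda_1,\infty)$. To do so, write $-G'/G$ as a $\bar{g}$- and $Q_1$-weighted average of the per-$s$ hazard rates $-\partial_x\log Q_1$. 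Each of these is non-decreasing by Step 1, and the weights shift toward larger $s$ (larger hazards) as $x$ grows. A covariance (Chebyshev sum) inequality would then give monotonicity of the average.

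\textbf{Step 3 (conclusion).} With the derivative of $\log\check{T}$ strictly decreasing on $[\Lambda_1,\infty)$, there are two cases. If the derivative is $\le0$ at $\Lambda_1$, then $\check{T}$ is monotonically decreasing on the whole interval. Otherwise it has a unique zero, and $\check{T}$ is unimodal, hence quasiconcave. This is the stated dichotomy.

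The main obstacle I expect is Step 2, the mixture argument: showing that the weighted average of hazard rates stays monotone. If the Chebyshev-type argument fails, my fallback is to note that $\bar{g}$ is itself log-concave in $s$ for $M\ge1$. A Prékopa-type argument on the joint function of $(s,x)$ would then give log-concavity of $G$ directly, since log-concave times log-concave integrated over $s$ is log-concave.
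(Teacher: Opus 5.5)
\textbf{There is a genuine gap: Step 2 fails, and the key monotonicity runs the other way from what you claim.}

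Put $c(s)\triangleq\bar{p}_1 s+\bar{p}_2$, $h_s(x)\triangleq-\partial_x\log Q_1\bigl(a,\sqrt{c(s)x}\bigr)$ and $w_x(s)\propto Q_1\bigl(a,\sqrt{c(s)x}\bigr)\bar{g}(s)$. A direct computation gives
\[
\frac{d}{dx}\left(-\frac{G'(x)}{G(x)}\right)=\mathbb{E}_{w_x}\bigl[\partial_x h_s(x)\bigr]-\mathrm{Var}_{w_x}\bigl(h_s(x)\bigr).
\]
Here $h_s(x)=c\,\eta(cx)$, where $\eta$ is the non-decreasing hazard rate of a noncentral $\chi^2_2$ variable, so $h_s$ increases with $s$. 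Hence $\partial_x\log\bigl(w_x(s_2)/w_x(s_1)\bigr)=h_{s_1}-h_{s_2}\le 0$ for $s_1<s_2$. As $x$ grows, the weights therefore move toward \emph{smaller} $s$, which carry \emph{smaller} hazards; this is a survivorship effect, not a shift toward larger hazards. The covariance term enters with a minus sign, so a Chebyshev-type inequality works against you. Step 1 only gives $\mathbb{E}_{w_x}[\partial_x h_s]\ge 0$; it does not show that this term dominates the variance.

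The inputs you use are each component log-concave in $x$, $\bar g$ log-concave, and $c$ increasing. All of them are present when $a=0$ ($K=0$). There $G(x)=\int_0^\infty e^{-c(s)x/2}\bar g(s)\,ds$ is a mixture of exponentials and is strictly log-convex by Cauchy--Schwarz. So no argument from these inputs alone can yield log-concavity of $G$.

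The Pr\'ekopa fallback fails for the same reason. The bilinear argument $(\bar p_1 s+\bar p_2)x$ prevents joint log-concavity in $(s,x)$. At $a=0$ the log-integrand is $-(\bar p_1 s+\bar p_2)x/2+\log\bar g(s)$, whose Hessian has determinant $-\bar p_1^2/4<0$.

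\textbf{Step 1 is true, but not for the reason you give.} The threshold $\psi^{-1}(1/a^2)$ in condition~(\ref{ineqstar}) is where the density $e^{-(a^2+cx)/2}I_0(a\sqrt{cx})\propto-\partial_x Q_1$ peaks. That is where $Q_1(a,\sqrt{cx})$ switches from concave to convex in $x$, not a log-concavity threshold. The second log-derivative also involves $Q_1$ itself, not only $I_1/I_0$. Log-concavity in fact holds on all of $[0,\infty)$: $\frac{d}{dy}\log\bigl(e^{-y/2}I_0(a\sqrt{y})\bigr)=-\frac12+\frac{a^2}{2}\psi(a\sqrt{y})$ is decreasing, so the density, and hence its tail, is log-concave. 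Consequently $\Lambda_1$ plays no role in your argument, which signals that you are not using the mechanism the theorem rests on.

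\textbf{How the paper uses $\Lambda_1$.} The condition $x\ge\Lambda_1\approx a^2/\bar p_2$ guarantees $\sqrt{(\bar p_1 s+\bar p_2)x}\ge a$ for every $s$. This makes the large-argument asymptotic $Q_1(a,b)\approx\frac12\sqrt{b/a}\,e^{-(b-a)^2/2}$ applicable, and the derivative becomes $BW_1(x)D_\Omega(x)$. The mixture effect you ran into is removed by freezing the weighted mean $\Omega$. The paper then checks numerically that $D_\Omega$ has a single $+\to-$ zero crossing.

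\textbf{What a rigorous version of your route would need.} You would need an extra quantitative hypothesis giving $\mathrm{Var}_{w_x}(h_s)\le\mathbb{E}_{w_x}[\partial_x h_s]$ on $[\Lambda_1,\infty)$. One such hypothesis is that the spread of $\bar p_1 s$ under $\bar g$ is small relative to $\bar p_2$, so that $G$ is close to the single log-concave term $Q_1(a,\sqrt{\bar p_2 x})$. With $G$ log-concave, $\log_2(1+x)G(x)$ is log-concave and hence quasiconcave. That hypothesis is not part of the statement.
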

		\begin{proof}
See Appendix B.
\end{proof}
Therefore, we can achieve maximum $\check{T}$ by unique $x^{\star\star}$ in $x\ge \Lambda_1$ by $x^{\star\star}=\max(\Lambda_1, x_\omega),$ which is equivalent to letting corresponding $R^{\star\star}$ by
\begin{equation}
\label{xr22}
R^{\star\star} = \log_2 (1+x^{\star\star}).
\end{equation}
\begin{remark}
\label{subre}
In Section~\ref{numr}, we will numerically demonstrate that $R^{\star\star}$ provides a close approximation to the optimal rate. Notably, this optimal rate is obtained by solving $D_{\Omega}(x) = 0$ defined in Appendix B, which involves lower computational complexity compared to an exhaustive search of finding maximum $\check{T}$. However, the precision of the approximation in~(\ref{happx}) depends on both $t$ and $B$, and any loss in accuracy may result in a deviation of the estimated $\check{T}$, thereby widening the optimality gap. We will also illustrate in Section~\ref{numr} how this gap changes under various parameter settings.
\end{remark}
\subsubsection{$x\in[\Lambda_0, \Lambda_1]$}
For remaining $x\in[\Lambda_0, \Lambda_1]$, we cannot determine the convexity-concavity or monotonicity of $\check{T}$ with respect to $x$, which we have to deploy exhaustive 1D grid search for the range~\cite{boyd, HBRIS}. Therefore, by the optimal $x^{\star\star\star}$ that maximizes $\check{T}$ in $[\Lambda_0, \Lambda_1]$ (i.e., $x^{\star\star\star}\triangleq \argmax_{x\in[\Lambda_0, \Lambda_1]} \check{T}$), the corresponding $R^{\star\star\star}$ is given by
\begin{equation}
\label{xr33}
R^{\star\star\star} = \log_2 (1+x^{\star\star\star}).
\end{equation}
\subsubsection{Finding Global Maximum}
By aggregating the three results, the optimal $R_o$ can be obtained as
\begin{equation}
\label{optR1}
R_o=\argmax_{R\in\{R^{\star}, R^{\star\star}, R^{\star\star\star}\}}T(R),
\end{equation}
and by incorporating the rate constraint, the final optimal solution $R^{\bullet}$ to Problem~(\ref{thpopt}) becomes
\begin{equation}
\label{opt47}
R^\bullet=\min\left(\max\left(R_{\min}, R_o\right), R_{\max}\right).
\end{equation}
The complete procedure for the proposed throughput maximization is summarized in Algorithm~\ref{euclid}. {It is worth emphasizing that Algorithm~\ref{euclid} is analytically grounded in the monotonic relationship between the outage probability and $|\eta|$ established in Theorem~\ref{thmnd}, rather than being a heuristic design.}

{The computational complexity of the proposed algorithm is dominated by three components: (i) a one-dimensional grid search with $M_{\Lambda}$ discrete candidates over $[\Lambda_0,\Lambda_1]$, (ii) a gradient-ascent procedure with $N_g$ iterations, and (iii) a Newton update with $N_n$ iterations. Accordingly, the overall computational complexity of Algorithm~\ref{euclid} scales as $\mathcal{O}\left((M_{\Lambda}+N_g+N_n)L\right)$, where $L$ is the number of nodes per dimension employed for integrand. The complexity grows linearly with the numerical resolution and the number of optimization iterations. Importantly, this complexity is independent of the system dimension. This stands in sharp contrast to brute-force search by searching $M_R$ rate candidate in $[R_{\min}, R_{\max}]$, where each candidate rate evaluation requires computing~\eqref{pout} involving $(N+2B)$-fold numerical integrals, resulting in a prohibitive complexity on the order of $\mathcal{O}(M_R L^{N+2B})$. By exploiting the analytical structure of the outage probability and confining the search to the reduced interval $[\Lambda_0,\Lambda_1]$ instead of whole given range of $R$, the proposed framework effectively converts a high-dimensional, exponential-complexity problem into a low-dimensional optimization with linear complexity. Furthermore, as numerically confirmed later in Fig.~\ref{graphT}, the effective operating region of the optimal rate occupies only a small fraction of the original brute-force search interval, which further reduces $M_{\Lambda}$ relative to $M_R$. As a result, the proposed algorithm achieves a favorable trade-off between accuracy and computational efficiency, making it well suited for practical and scalable FAS-ARIS systems, including scenarios that require real-time or online throughput optimization.}
		\begin{algorithm} [t]
	\caption{Proposed Throughput Maximization Algorithm}\label{euclid}
	\begin{algorithmic}[1]
		\State Determine $\Lambda_0 \triangleq \frac{1}{\bar{p}_1 U + \bar{p}_2} \left( \frac{1}{a} \psi^{-1}\left( \frac{1}{a^2} \right) \right)^2$
		\If {$\frac{\partial \check{T}}{\partial x}|_{x=\Lambda_0} \ge 0$}
		\State Determine $x^\star = \Lambda_0$
		\Else
		\State \multiline{Find $x^\star$ by solving $\frac{\partial\check{T}}{\partial x} =0$}
		\EndIf
		\State Determine $R^\star = \log_2 (1+x^\star)$
		\State Determine $\Lambda_1\triangleq\frac{1}{\bar{p}_2} \left(\frac{1}{a}\psi^{-1}\left(\frac{1}{a^2}\right)\right)^2$
		\State \multiline{Determine $\Omega \approx \mathbb{E}_{\Psi_p} [\sqrt{\bar{p}_1 s + \bar{p}_2}]$ and find $x_\omega$ by\\solving $D_\Omega (x)=0$ using Newton's Method}
		\State \multiline{Determine $x^{\star\star}=\max(\Lambda_1, x_\omega)$ and\\
		$R^{\star\star} = \log_2 (1+x^{\star\star})$}
		\State \multiline{Find  $x^{\star\star\star}\triangleq \argmax_{x\in[\Lambda_0, \Lambda_1]} \check{T}$ by grid search and determine $R^{\star\star\star} = \log_2 (1+x^{\star\star\star})$}
		\State Determine $R_o=\argmax_{R\in\{R^{\star}, R^{\star\star}, R^{\star\star\star}\}}T(R)$
		\State Determine $R^\bullet=\min\left(\max\left(R_{\min}, R_o\right), R_{\max}\right)$
	\end{algorithmic}
\end{algorithm}
{\section{Analysis-Guided Optimization Framework}
Our analysis establishes that by Theorem~\ref{thmnd}, $P_{\mathrm{out}}(t)$ is monotonically non-increasing in $t=|\eta|$. This implies that minimizing $P_{\mathrm{out}}$ with respect to $\boldsymbol{\Phi}$ is equivalent to maximizing $|\eta|$, which leads to the reformulation of~\eqref{outopt} into tractable~\eqref{outopt2}. The phase-alignment solution in~\eqref{optimalphi} is then obtained, which constitutes the first stage of the proposed framework.}

{Building on this analytical simplification, the throughput optimization problem in Section~\ref{thput} further exploits $\hat{P}_{\mathrm{out}}$, whose monotonic dependence on $t$ is also guaranteed in Remark~\ref{monohat}. By leveraging the analytical structure of $\hat{P}_{\mathrm{out}}$, we transform the original intractable problem into a one-dimensional rate optimization problem. Algorithm~\ref{euclid} systematically leverages the region-dependent behavior of $T(R)$, namely, its concave, quasiconcave, and non-monotonic characteristics, revealed by the analytical results in Section~\ref{thput}, thereby enabling near-optimal performance with low computational complexity.}

{Therefore, the role of the derived analysis is twofold: (i) it enables a closed-form ARIS phase optimization via monotonicity with respect to $|\eta|$, and (ii) it justifies the low-complexity, region-partitioned throughput optimization procedure in Algorithm~\ref{euclid}, which avoids exhaustive search while preserving near-optimality.}
\section{Numerical Results}\label{numr}
\subsection{Simulation Setup}
In our simulations, we adopt a FAS-ARIS system with the 3D coordinate of BS, ARIS, and MU, which are located at $(0, 0, 5)$ m, $(15, 15, 5)$ m, and $(55, 0, 0)$ m, respectively. The ARIS is placed close to the BS, as required by the system design condition in Section~\ref{4.1} that the ARIS should be located near the BS to guarantee a LoS BS-ARIS link. The channels between the ARIS and MU are modeled as flat Rician fading channels, with a path-loss exponent of 2.2 and $K=1$~\cite{FASRIS22, arisover}.
{The detailed simulation parameters are summarized in Table~\ref{SimPar}, which are chosen to reflect realistic operating conditions and practical hardware constraints in FAS- and ARIS-assisted systems~\cite{FASARIS, FASRIS22, FASRIS, aris5}. In particular, the ARIS size, power budget, and amplification gain follow the hardware-constrained ARIS modeling philosophy~\cite{FASARIS, aris5}, while the FAS-related parameters such as the number of ports, aperture size, and spatial correlation are consistent with commonly adopted FAS models in the literature~\cite{FASARIS, FASRIS22, FASRIS}. Low-to-moderate transmit power and target rate are considered to avoid artificially inflating performance~\cite{FASRIS22, FASRIS}, as to be depicted in Fig.~\ref{outP}(a) and~\ref{outP}(c), respectively. Overall, the selected parameters are aligned with prior FAS-RIS and FAS-ARIS studies; the reported performance trends are robust to reasonable variations of these values.} Monte Carlo simulation results are obtained by averaging over $10^4$ independent trials of signal transmission and reception conducted under the specified simulation environment. For benchmarking, we considered the FAS-RIS scenario as proposed in~\cite{FASRIS, FASRIS22}, which employs a conventional PRIS and the signal model is given as follows:
\begin{equation}
\label{bench}
\begin{cases}
y_k^{(p)} = \sqrt{P} \mathbf{h}_k^* \mathbf{\Phi} \mathbf{g} x + v_0~(k \in \mathbb{N})\\
{P}_{\mathrm{out}}^{(p)}=\mathbb{P}\left[\log_2 \left(1+\frac{P|A_{\max}|^2}{\sigma^2 }\right)<R\right]~(|A_{\max}|\triangleq\max_k A_k)\\
T=R(1-{P}_{\mathrm{out}}).
\end{cases}
\end{equation}
{We also evaluated the following canonical baselines:
\begin{itemize}
\item \textbf{Scenario without FAS:} It corresponds to a single-input single-output (SISO) ARIS-assisted systems, where the MU employs a single fixed antenna without spatial fluidity. In this case, $\mathbf{\Phi}$ is configured using the same phase-alignment principle described in Section~\ref{opop}.
\item \textbf{Perfect CSI scenario:} We assume that instantaneous full CSI is available for ARIS configuration, thereby providing a strict upper-bound on the achievable performance of the proposed FAS-ARIS system. Since $\{B_k\}$ is invariant with respect to $\mathbf{\Phi}$, minimizing $P_{\mathrm{out}}$ is equivalent to maximizing $\{|\mathbf h_k^*\mathbf{\Phi}\mathbf g|\}$ for the known $\{\mathbf h_k\}$ and $\{\mathbf g\}$. For a given $k$, this is achieved by $\theta_{m,k}^\star=\angle h_{k,m}-\angle g_m~(\forall m)$, which results in
$
|\mathbf h_k^*\mathbf{\Phi}^\star(k)\mathbf g|
=
\sum_{m=1}^M |h_{k,m}||g_m|.
$
Accordingly, the optimal MU port is selected as
\begin{equation}
\label{eq:k_opt_perfectCSI}
k^\star
\in
\arg\max_{k\in\mathcal N}
\sum_{m=1}^M |h_{k,m}||g_m|,
\end{equation}
with corresponding
$
\mathbf{\Phi}^\star
=
\mathrm{diag}
\big(
\{e^{j\theta_{m,k^\star}^\star}\}
\big).
$
\end{itemize}
For both benchmarks, $R^\bullet$ is determined via 1D exhaustive search over the feasible range of $R$.
}

\begin{center}
	\begin{table}[t] 
	\centering
		\caption{Simulation Parameters}
		\begin{tabular}{|>{\centering } m{1.2cm} |>{\centering} m{4.4cm} |>{\centering} m{1.4cm} | }
			\hline
			\textbf{Paramet{\tiny }er} & \textbf{Description} & \textbf{Value}
			\tabularnewline
			\hline
				\centering			$M$  & Number of ARIS elements\\(unless referred) & 4 \tabularnewline \hline
			\centering			$N$  & Number of FAS ports\\(unless referred) & 100  \tabularnewline \hline
			\centering                   $R$  & Threshold rate of ${P}_{\mathrm{out}}$\\(unless referred) & 2~[bps/Hz] \tabularnewline \hline
			\centering			$P$  & BS Transmit power\\(unless referred) & 10~[dBm]  \tabularnewline \hline
			\centering			$P'$  & Power budget of ARIS & Same as $P$  \tabularnewline \hline
			\centering			$\sigma^2$  & Thermal noise power & -104~[dBm] \tabularnewline \hline
			\centering			$\sigma_0^2$  & Dynamic noise power at ARIS & Same as $\sigma^2$  \tabularnewline \hline
			\centering			$W$  & Normalized FAS aperture\\with respect to $\lambda$ & 5  \tabularnewline \hline
			\centering		$\rho_{\max}^2$ & Maximum amplification gain of ARIS & 40~[dB]~\cite{aris5} \tabularnewline \hline
			\centering			$\mu^2$  & Correlation coefficient between\\FAS ports in same block & 0.97  \tabularnewline \hline
			\end{tabular}
		\label{SimPar}
	\end{table}
\end{center}
\begin{figure*}[t]
	\begin{center}
		\includegraphics[width=1.5\columnwidth,keepaspectratio]%
		{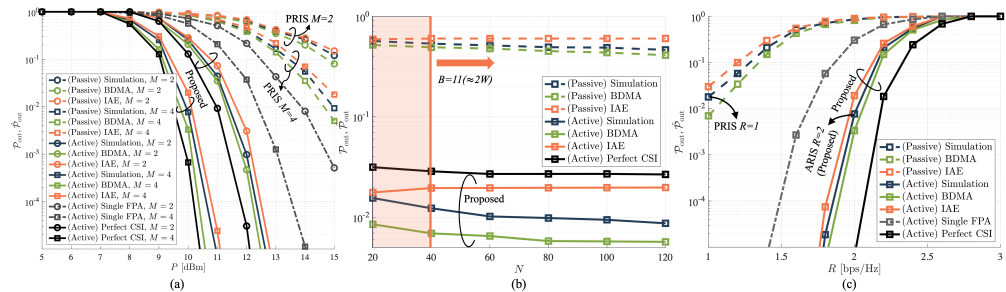}
		\caption{Outage probability with respect to the (a) $P$ with $M=2$ and $4$ (b) $N$ (c) $R$ under the implementation of FAS-ARIS/PRIS.}
		\label{outP}
	\end{center}
\end{figure*}
\subsection{Outage Probability Comparisons under Several Effects}
In Fig.~\ref{outP}, we examine the impact of transmit power $P$ on the outage probability ${P}_{\mathrm{out}}$ for different $M$. Each component in the legend refers to the following:
\begin{enumerate}
\item \textbf{``Simulation''}: the Monte Carlo-simulated outage probability obtained by averaging over a large number of independent trials
\item \textbf{``BDMA''}: theoretical analysis based on the BDMA model as derived in~(\ref{pout})
\item \textbf{``IAE''}: theoretical upper-bound obtained using the IAE approximation method described in~(\ref{poutupp})
{\item\textbf{``Single FPA''}: baseline with single fixed antenna at MU.
\item \textbf{``Perfect CSI''}: baseline assuming perfect CSI.}
\end{enumerate}
As observed in Fig.~\ref{outP}(a), the outage probabilities of all schemes decrease as $P$ increases. Additionally, for a given $P$, larger values of $M$ lead to lower outage probabilities across all schemes. This is because a larger $M$ provides more DoFs through both the ARIS/PRIS, thereby enhancing spatial diversity and improving link robustness~\cite{HBRIS, aris5}. Moreover, it is evident in Fig.~\ref{outP}(a) that both ``BDMA'' and ``IAE'' model closely matches the simulation results. 
Herein, the approximation accuracy of the ``IAE'' model is also comparable to the simulation results, supporting the use of the analytically tractable $\hat{{P}}_{\mathrm{out}}$ instead of ${P}_{\mathrm{out}}$ in the proposed throughput optimization framework, particularly when computational efficiency is prioritized. Importantly, the FAS-ARIS systems consistently outperforms the FAS-PRIS across the entire power range, and the performance gap becomes especially noticeable for $P > 9~\mathrm{dBm}$. This is due to the active amplification gain provided by ARIS, which enhances both the effective received power and the diversity gain via signal boosting, and validates the benefit of ARIS architecture in power-rich environments.

In Fig.~\ref{outP}(b), we illustrate the impact of the number of ports $N$ on the outage probability. As observed, the outage probabilities of all schemes generally decrease as $N$ increases, except for the ``IAE'' model, which exhibits a slower rate of decline. This is because a larger $N$ enhances the SNR of the MU through selection diversity, a trend accurately captured by both the ``BDMA'' model and the ``Simulation''. However, as $B$ approaches $2W$ with increasing $N$~\cite{newcor}, the diversity gain begins to saturate, resulting in a slower decline in outage probability. Notably, the ``IAE'' model, which is highly sensitive to the value of $B$ and invariant of block-diagonal structure, exhibits the most stable behavior with respect to $N$ under $W = 5$. In fact, the outage probability under the ``IAE'' model becomes nearly constant for $N>40$, as $B$ converges to $2W$~\cite{newcor}. 
Nonetheless, both ``BDMA'' and ``IAE'' model remain a computationally efficient and are sufficiently accurate for practical $N$, especially given the analytical intractability of direct simulation at high dimensions. Notably, ARIS consistently achieves lower outage probability than PRIS across the entire range of $N$, regardless of the spatial resolution. The performance superiority of FAS-ARIS over FAS-PRIS is persistent and robust, reinforcing its effectiveness in spatially correlated environments common in FAS-aided 6G network.
\begin{figure}[t]
	\begin{center}
		\includegraphics[width=0.6\columnwidth,keepaspectratio]%
		{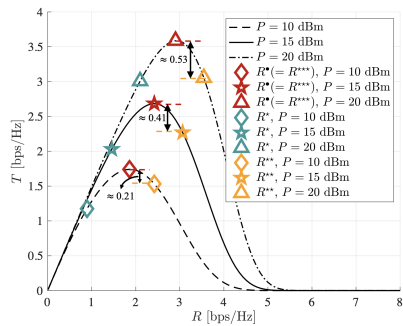}
		\caption{{Throughput with respect to $R$ under different $P$.}}
		\label{graphT}
	\end{center}
\end{figure}
\begin{figure*}[t]
	\begin{center}
		\includegraphics[width=1.4\columnwidth,keepaspectratio]%
		{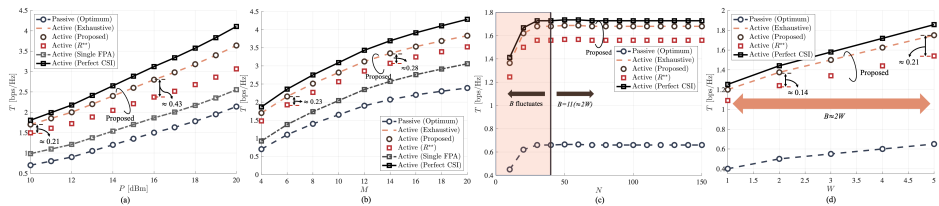}
		\caption{Throughput with respect to (a) $P$ (b) $M$ (c) $N$ (d) $W$ under the implementation of FAS-ARIS/PRIS.}
		\label{thP}
	\end{center}
\end{figure*}

Fig.~\ref{outP}(c) present the outage probability as a function of the threshold rate $R$. As observed, the outage probabilities of all schemes increase with $R$. The figure shows the comparable performance gap between ``Simulation'' and the ``IAE'' model and between ``Simulation'' and the ``BDMA'' model. This indicates that the ``IAE'' model provides an accurate approximation of the outage probability, further validating our approach of using~(\ref{poutupp}) for throughput maximization with respect to $R$. Furthermore, the FAS-ARIS configuration at $R = 2~\mathrm{bps/Hz}$ achieves even significantly lower outage probability than the FAS-PRIS system does even at $R = 1~\mathrm{bps/Hz}$. This highlights the performance superiority of FAS-ARIS in supporting higher data rates under the same reliability constraints, thanks to its active amplification gain and enhanced energy focusing capability. This performance gap becomes especially pronounced for $R < 2.4~\mathrm{bps/Hz}$, implying that in this regime, the advantage of FAS-ARIS, with its amplified reflection and increased effective DoFs, leads to a substantial reduction in outage events compared to FAS-PRIS, which lacks amplification gain. This makes FAS-ARIS a promising candidate for reliable and high-throughput FAS-aided communication systems in 6G.

{From Fig.~\ref{outP}, it is observed that the proposed FAS-ARIS scheme with statistical CSI and optimized ARIS phase configuration achieves $P_{\mathrm{out}}$ that closely approaches the ``Perfect CSI'' bound, while consistently outperforming the ``Single FPA'' baseline. This behavior is explained by the fact that, although only the long-term channel statistics is accessible, the derived phase optimization maximizes the effective deterministic channel components, thereby ensuring constructive signal combining at the selected FAS port. In contrast, the ``Single-FPA'' baseline completely disables spatial fluidity and port diversity at the receiver, which fundamentally limits its ability to exploit fading variations, leading to significantly higher $P_{\mathrm {out}}$. These results demonstrate that the proposed analysis and optimization on statistical-CSI-based FAS-ARIS design can effectively bridge the performance gap between practical implementations and the ideal full-CSI benchmark, without incurring the prohibitive overhead associated with instantaneous CSI acquisition.}

\subsection{Throughput Comparisons under Several Effects}
{In Fig.~\ref{graphT}, $T$ is plotted as a function of $R$ for $P=10, 15$ and $20~\mathrm{dBm}$, together with $R^{\star}$, $R^{\star\star}$, and $R^{\star\star\star}$. The curves clearly exhibit a unimodal behavior, reflecting the fundamental trade-off between rate and outage probability: while increasing $R$ initially improves throughput, $T$ sharply degrades beyond the peak due to the rapidly increasing outage probability, with a steeper decline observed at higher transmit powers. As indicated in the figure, $R^{\star\star\star}$ achieves the global maximum of $T$. This follows from the unimodality of $T$ over the interval $[\log_2(1+\Lambda_0),\log_2(1+\Lambda_1)]$ and the fact that
\begin{equation}
\label{rm3es}
\left.\frac{\partial \check{T}}{\partial x}\right|_{x=\Lambda_0}=\left.\frac{\partial \check{T}}{\partial R}\frac{\partial R}{\partial x}\right|_{x=\Lambda_0}\ge 0,
\end{equation}
as established in Remark~\ref{r000}, where $R^{\star}=\log_2(1+\Lambda_0)$. Herein, it further reveals that the effective optimization interval $[\log_2(1+\Lambda_0),\log_2(1+\Lambda_1)]$ is significantly narrower than the original brute-force search range; the size of the interval is approximately $1.61, 1.53$ and $1.45$ for $P=10, 15$ and $20~\mathrm{dBm}$, respectively, whereas the brute-force search range spans more than $R\in[0,6]$~bps/Hz. This clearly indicates that only a small portion of the full rate interval contributes to the optimal throughput. Therefore, Fig.~\ref{graphT} not only validates the unimodality of the throughput function, but also numerically confirms that the reduced interval $[\Lambda_0,\Lambda_1]$ is sufficiently tight in practice, which directly translates into a substantial reduction in computational complexity by shrinking the grid search from $M_R$ to $M_\Lambda$ in Section~\ref{thput}. Furthermore, $R^{\star\star}$ yields a throughput that closely approximates the globally optimal $R^{\star\star\star}$ without resorting to grid search. As shown in Fig.~\ref{graphT} and will be shown in Fig.~\ref{thP}, the resulting optimality gaps are small and explicitly quantified (approximately 0.21, 0.41 and 0.53 for $P=10, 15$ and $20~\mathrm{dBm}$, respectively). This confirms that Algorithm~\ref{euclid} provides a computationally efficient yet near-optimal solution for throughput maximization.}

In Fig.~\ref{thP}(a), we present the throughput performance across various transmit power values $P$. Each component in the legend refers to the following:
\begin{enumerate}
\item \textbf{``Passive''}: the FAS-PRIS systems as studied in~\cite{FASRIS22}
\begin{enumerate}
\item \textbf{``Passive (Optimum)''}: the throughput achieved by using a gradient ascent algorithm to find the optimum, as demonstrated in~\cite{FASRIS22}
\end{enumerate}
\item \textbf{``Active''}: the FAS-ARIS systems proposed in this work
\begin{enumerate}
\item \textbf{``Active (Proposed)''}: the proposed throughput optimization described in Section~\ref{thput}
\item \textbf{``Active (Exhaustive)''}:  the throughput obtained through exhaustive search over the 1D space of $R$
\item \textbf{``Active ($\mathbf{R}^{\star\star}$)''}: the proposed sub-optimal solution obtained from Fig.~\ref{graphT}, where we let $R^\bullet = R^{\star\star}$
{\item \textbf{``Active (Single FPA)''}: a baseline with single FPA at MU, thereby disabling spatial fluidity and resulting in consistently inferior performance compared to FAS-ARIS configurations.
\item \textbf{``Active (Perfect CSI)''}: an idealized performance benchmark assuming full CSI, which yields a strict performance upper-bound for given FAS-ARIS systems.}
\end{enumerate}
\end{enumerate}
From the figure, we observe that throughput improves with increasing $P$ across all configurations, where the proposed FAS-ARIS system can achieve the same throughput as FAS-PRIS while requiring significantly less power: FAS-PRIS needs around $P = 17 \sim 18~\mathrm{dBm}$ to match $T$ that FAS-ARIS achieves with only $P = 10~\mathrm{dBm}$. Notably, in the proposed FAS-ARIS system, ``Active (Proposed)'' model consistently achieves the same throughput as ``Active (Exhaustive)'' model, verifying the optimality of our proposed approach. The ``Active $(R^{\star\star})$'' model also performs well but is slightly sub-optimal, particularly at higher values of $P$. Moreover, the throughput gap between ``Active (Proposed)'' and ``Active $(R^{\star\star})$'' becomes more pronounced as $P$ increases: the optimality gap for $P = 10$ and $16~\mathrm{dBm}$ are approximately given by 0.21 and 0.43, respectively. This is because the term
\begin{equation}
\label{appintq}
\int_0^\infty Q_1 \left(a, \sqrt{(\bar{p}_1 s+\bar{p}_2)x}\right)\bar{g}(s)ds
\end{equation}
in~(\ref{TrxH}) becomes larger at higher $P$, leading to greater sensitivity to approximation errors in~(\ref{happx}). Consequently, the approximation in~(\ref{happx}) deviates more as $P$ increases, reducing the optimality of the throughput estimated by $R^{\star\star}$. {Nevertheless, even at higher transmit powers, the resulting optimality gap remains relatively small compared to the overall throughput values, thereby confirming the robustness and practical effectiveness of the proposed low-complexity solution. This trend is consistently observed in Fig.~\ref{thP}(b)-(d), where similar behaviors persist across different system parameters.}

In Fig.~\ref{thP}(b), we examine the throughput performance across various values of $M$. As observed, the throughput increases for all configurations in given range of $M$, consistent with the expected gain from additional reflecting elements. Similar to the behavior seen with increasing $P$, the ``Active (Proposed)'' model consistently achieves the same throughput as the ``Active (Exhaustive)'' model, demonstrating the optimality of our proposed framework. Furthermore, the proposed FAS-ARIS system can attain the same throughput as FAS-PRIS with substantially fewer elements: while FAS-PRIS requires approximately $M = 12 \sim 14$ to achieve a given $T$, FAS-ARIS reaches the same performance with just $M = 6$. Besides, similar to Fig.~\ref{thP}(a), the performance gap between the ``Active (Proposed)'' and the sub-optimal ``Active $(R^{\star\star})$'' model slightly increases with $M$: the optimality gap for $M = 6$ and $14$ are approximately given by 0.23 and 0.28, respectively. This is due to the following two effects in~(\ref{TrxH})~\cite{marQ}:
\begin{enumerate}
\item $a = \sqrt{2KM}$ in $Q_1$ increases with $M$, which leads to an increased value of $Q_1(a, b)$ for a fixed $b$.
\item As $M$ increases, the PDF $\bar{g}(s)$ becomes more right-skewed, assigning greater weight to larger values of $s$, which correspond to larger values of $b=\sqrt{(\bar{p}_1 s+\bar{p}_2)x}$. This shift tends to reduce $Q_1(a, b)$ with a fixed $a$, partially counteracting the effect of the increasing $a$.
\end{enumerate}
However, as shown in the figure, the impact of the increase in $a$ is slightly more dominant, resulting in a net increase of the integrand and hence the integral. 
As a result, the approximation used in~(\ref{happx}) becomes less accurate at larger $M$, leading to a degradation in the optimality of $R^{\star\star}$.

In Fig.~\ref{thP}(c), we analyze the throughput performance across various values of $N$. As shown in the figure, the throughput curves of all schemes exhibit fluctuations as $N$ varies from 10 to 40. This is primarily due to changes in the number of blocks $B$ in the diagonal correlation matrix, which affects the diversity structure. However, beyond $N=40$, the curves flatten out, indicating that $B$ becomes constant $(\approx 2W)$~\cite{newcor} and no longer changes with $N$. Importantly, since $N$ does not affect~(\ref{appintq}), the performance gap between the optimal and sub-optimal solution corresponds to $R^{\star\star}$ remains unchanged as $N$ increases. Thus, the accuracy of proposed approximation by $R^{\star\star}$ is preserved regardless of $N$ once $B$ stabilizes.

In Fig.~\ref{thP}(d), we investigate the throughput performance across various values of $W$. Since analytically characterizing the resulting optimality gap in closed form is mathematically intractable due to the non-linear dependence on $B$, we instead evaluate the gap numerically for representative values of $W$. As observed, the optimality gap between the ``Active (Proposed)'' and the sub-optimal ``Active $(R^{\star\star})$'' models increases with larger $W$. This behavior arises because, under a fixed $N$, increasing $W$ reduces the statistical dependency among the channels of each port, resulting in a larger $B$. Moreover, the convergence value of $B$, which is $2W$, also increases. This causes the approximation used in~(\ref{happx}) to become less accurate, since the first-order approximation $(1 - t)^B \approx 1 - Bt$ for fixed $t \ll 1$ becomes increasingly inaccurate as $B$ grows. Consequently, the degradation in approximation accuracy leads to a reduced optimality of $R^{\star\star}$: the optimality gap for $W = 2$ and $5$ are approximately given by 0.14 and 0.21, respectively.

{From Fig.~\ref{thP}, consistent with the trends observed in Fig.~\ref{outP}, the ``Active (Single FPA)'' and ``Active (Perfect CSI)'' curves respectively establish clear lower- and tight upper-bounds on $T$. The persistent gap underscores the fundamental importance of spatial reconfigurability and channel knowledge in shaping the throughput limits of FAS-ARIS systems, thereby delineating the practical performance region within which statistical-CSI-based designs are expected to operate. Together, these results quantitatively characterize the feasible performance bounds for practical FAS-ARIS implementations.}
\section{Conclusion}
In this paper, we developed an integrated FAS-ARIS framework to address the limitations of compact MIMO implementations in 6G. By explicitly modeling spatial correlation among fluid antenna ports and incorporating the effects of active noise amplification, we derived tractable outage probability expressions, provided a tight analytical upper-bound, and formulated efficient algorithms for outage and throughput optimization. The proposed region-partitioned optimization approach achieved near-optimal throughput with significantly reduced complexity, offering a practical trade-off between performance and computational cost. Numerical evaluations under diverse system settings consistently verified that FAS-ARIS outperforms conventional FAS-PRIS designs, achieving higher reliability and energy efficiency. 
These insights highlight FAS-ARIS as a promising building block for 6G, bridging theoretical advances with practical deployment considerations.
	\section*{Appendix A}
\section*{Proof of Theorem~\ref{quasi}}
It is clear that $\check{T}$ has a form of 
\begin{equation}
\label{TrxH}
\begin{aligned}
\check{T}
=&\log_2 (1+x)\\
&\left(1-\left(\int_0^\infty \left[1 - Q_1\left( a,\sqrt{(\bar{p}_1 s+\bar{p}_2)x}\right) \right]\bar{g}(s)ds\right)^B\right)\\
\mathop{=}^{(a)}&\log_2 (1+x)\\
&\left(1-\left(1-\int_0^\infty  Q_1\left( a,\sqrt{(\bar{p}_1 s+\bar{p}_2)x}\right) \bar{g}(s)ds\right)^B\right),
\end{aligned}
\end{equation}
where $a\triangleq\sqrt{2KM}$, $\bar{p}_1 \triangleq \frac{2\sigma^2}{P\bar{\sigma}^2}$ and $\bar{p}_2\triangleq \frac{2\sigma_0^2}{P\bar{\sigma}^2 \rho^{*2}}$. Moreover, $(a)$ holds since $\bar{g}(s)$ represents the PDF of a scaled non-central chi-squared distribution, as derived in~(\ref{pdfbb}).
\begin{figure}[t]
	\begin{center}
		\includegraphics[width=0.8\columnwidth,keepaspectratio]%
		{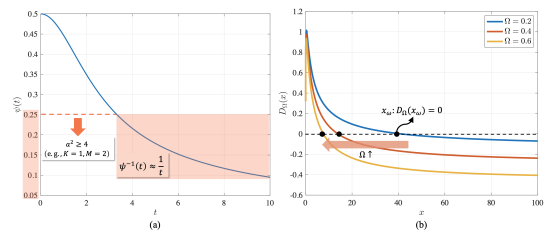}
		\caption{(a) Behavior of $\psi(t)\triangleq\frac{I_1 (t)}{tI_0 (t)}$. As the function value falls below 4, corresponding to $a^2 = 2KM = 4$ with, e.g., $K = 1$ and $M = 2$, we can conclude that $\psi(t)$ can be well-approximated by $\frac{1}{t}$. Furthermore, as $K$ and $M$ increase, this approximation becomes increasingly accurate. (b) Behavior of $D_{\Omega}(x)$ and $x_\omega$ for different $\Omega$ values. As $\Omega$ increases, the effect of $ - \frac{\Omega}{\sqrt{2\pi} x^{\frac{1}{4}}} \log_2(1 + x)$ in $D_{\Omega}$ increases, which leads to steeper decrease of $D_{\Omega}$. Furthermore, after $D_{\Omega}$ crosses zero, it converges to $0^-$.}
		\label{xphi}
		\end{center}
		\end{figure}
Now consider the first and second-order derivative of $\check{T}$, which can be derived in~(\ref{parx}).
\begin{figure*}
\begin{equation}
\label{parx}
\begin{cases}
\frac{\partial \check{T}}{\partial x}=&\frac{1}{(1+x)\ln 2}\left(1-\left(\int_0^\infty \left[1 - Q_1\left( a,\sqrt{(\bar{p}_1 s+\bar{p}_2)x}\right) \right]\bar{g}(s)ds\right)^B\right)\\
& - B\log_2(1+x) \left(\int_0^\infty \left[1 - Q_1\left( a,\sqrt{(\bar{p}_1 s+\bar{p}_2)x}\right) \right]\bar{g}(s)ds\right)^{B-1}\int_0^\infty\left( e^{- \frac{a^2 + (\bar{p}_1 s+\bar{p}_2)x}{2}} I_0\left( a \sqrt{ (\bar{p}_1 s+\bar{p}_2)x } \right) \frac{\bar{p}_1 s+\bar{p}_2}{2}\right) \bar{g}(s)ds\\
\frac{\partial^2 \check{T}}{\partial x^2}=& -\frac{1}{(1+x)^2 \ln 2} \left(1-\left(\int_0^\infty \left[1 - Q_1\left( a,\sqrt{(\bar{p}_1 s+\bar{p}_2)x}\right) \right]\bar{g}(s)ds\right)^B\right)\\
&-\frac{2}{(1+x)\ln 2}   B\left(\int_0^\infty \left[1 - Q_1\left( a,\sqrt{(\bar{p}_1 s+\bar{p}_2)x}\right) \right]\bar{g}(s)ds\right)^{B-1}\int_0^\infty\left( e^{- \frac{a^2 + (\bar{p}_1 s+\bar{p}_2)x}{2}} I_0\left( a \sqrt{ (\bar{p}_1 s+\bar{p}_2)x } \right)  \frac{\bar{p}_1 s+\bar{p}_2}{2}\right) \bar{g}(s)ds  \\
&+ \log_2(1 + x)  \Biggl[-B(B-1)(\int_0^\infty \left[1 - Q_1\left( a,\sqrt{(\bar{p}_1 s+\bar{p}_2)x}\right) \right]\bar{g}(s)ds)^{B-2} \\
&\left(\int_0^\infty\left( e^{- \frac{a^2 + (\bar{p}_1 s+\bar{p}_2)x}{2}} I_0\left( a \sqrt{ (\bar{p}_1 s+\bar{p}_2)x } \right)  \frac{\bar{p}_1 s+\bar{p}_2}{2}\right) \bar{g}(s)ds \right)^2-B(\int_0^\infty \left[1 - Q_1\left( a,\sqrt{(\bar{p}_1 s+\bar{p}_2)x}\right) \right]\bar{g}(s)ds)^{B-1}\\
 &\underbrace{\int_0^\infty}_{(\dagger)} \underbrace{e^{-\frac{a^2 +(\bar{p}_1 s+\bar{p}_2)x}{2}} \left(-I_0(a\sqrt{(\bar{p}_1 s+\bar{p}_2)x})   \frac{\bar{p}_1 s+\bar{p}_2}{2} + a I_1(a\sqrt{(\bar{p}_1 s+\bar{p}_2)x})  \frac{ \bar{p}_1 s+\bar{p}_2 }{2 \sqrt{ (\bar{p}_1 s+\bar{p}_2)x} } \right) \frac{\bar{p}_1 s+\bar{p}_2}{2}\bar{g}(s)}_{\triangleq o(s, x)}ds\Biggr]
\end{cases}
\end{equation}
\hrule
\end{figure*}
Here, $\frac{\partial \check{T}}{\partial x}\le 0$ holds, and we know that $\frac{\partial^2 \check{T}}{\partial x^2}\le 0$ when
\begin{equation}
\label{ineqstar}
-I_0(a\sqrt{(\bar{p}_1 s+\bar{p}_2)x})  +   \frac{ a I_1(a\sqrt{(\bar{p}_1 s+\bar{p}_2)x}) }{ \sqrt{ (\bar{p}_1 s+\bar{p}_2)x} } \ge 0,
\end{equation}
which can be transformed into 
\begin{equation}
\label{ge0}
\begin{aligned}
x \le \frac{1}{\bar{p}_1s + \bar{p}_2}\left(\frac{1}{a} \psi ^{-1}\left(\frac{1}{a^2}\right)\right)^2\triangleq\Lambda(s),
\end{aligned}
\end{equation}
where the monotonically decreasing behavior of $\psi(t)$ ensures the existence of $\psi^{-1}$. Moreover, since $a^2 = 2KM$ is assumed to be sufficiently large to carry out the following approximation of $\psi(t)\approx \frac{1}{t}$~\cite{bessbook}, which is numerically shown in Fig.~\ref{xphi}(a), we can conclude that
\begin{equation}
\label{psiapp}
\begin{aligned}
\psi(t) \approx \frac{1}{t}&\rightarrow \psi^{-1}(t) \approx \frac{1}{t}\rightarrow \left( \frac{1}{a} \psi^{-1}\left( \frac{1}{a^2} \right) \right)^2 \approx a^2.
\end{aligned}
\end{equation}
To determine the range of $x$ that ensures the concavity of $\check{T}$, we approximate the upper-limit of $(\dagger)$ by a large but finite constant $U \gg 1$ instead of $\infty$. Then, for any $x$ that satisfies~(\ref{ge0}) and $\forall s \in [0, U]$,~(\ref{ineqstar}) holds throughout this interval, and $\frac{\partial^2 \check{T}}{\partial x^2} \le 0$ is thereby guaranteed. Accordingly, we define
\begin{equation}
\label{lambdam}
\Lambda_0 \triangleq \Lambda(U) = \frac{1}{\bar{p}_1 U + \bar{p}_2} \left( \frac{1}{a} \psi^{-1} \left( \frac{1}{a^2} \right) \right)^2,
\end{equation}
which is obtained by substituting $s = U$ into $\Lambda(s)$ in~(\ref{ge0}). We can therefore conclude that $\check{T}(x)$ is concave, i.e., $\frac{\partial^2 \check{T}}{\partial x^2} < 0$, for all $x \le \Lambda_0$, and the theorem follows.~$\blacksquare$
\begin{remark}
\label{rem1}
As $s\rightarrow\infty$, $o(s,x)$ goes to 0, since $I_0 (t)\approx I_1(t)$ for $t=a\sqrt{(\bar{p}_1 s+\bar{p}_2)x}\rightarrow\infty$~\cite{bessbook}, and then $o(s,x)$ becomes
\begin{equation}
\label{osx}
\begin{aligned}
&o(s,x)\sim \mathcal{O} \left(s^2 e^{-s} I_0 \left(\sqrt{s}\right)\right)\rightarrow0~\left(\because I_0 (\sqrt{s}) \sim \mathcal{O}\left(s^{-\frac{1}{4}}e^{\sqrt{s}}\right)\right),
\end{aligned}
\end{equation}
which implies that we can always take $U$ to sufficiently suppress $\int_U^\infty o(s,x)ds$ regardless of any $x \ge 0$.
\end{remark}
	\section*{Appendix B}
\section*{Proof of Theorem~\ref{max22}}
		By approximation in~(\ref{psiapp}), $\Lambda_1\approx \frac{a^2}{\bar{p}_2}$, which implies that
	\begin{equation}
		\label{imp}
		\begin{aligned}
		&x \ge \Lambda_1\leftrightarrow \bar{p}_2 x \ge a^2  
		\leftrightarrow \sqrt{(\bar{p}_1 s+\bar{p}_2 )x} \ge a~(\forall s\ge 0).
		\end{aligned}
		\end{equation}		
Under~(\ref{imp}), we can asymptotically approximate $Q_1$ by~\cite{marQ}
\begin{equation}
\label{qq}
\begin{aligned}
Q_1 (a, \sqrt{ (\bar{p}_1 s+\bar{p}_2)x})&\approx \sqrt{\frac{\sqrt{ (\bar{p}_1 s+\bar{p}_2)x}}{a}}Q\left(\sqrt{ (\bar{p}_1 s+\bar{p}_2)x}-a\right)\\
&\mathop{\approx}^{(b)} \frac{1}{2}\sqrt{\frac{\sqrt{ (\bar{p}_1 s+\bar{p}_2)x}}{a}}e^{-\frac{(\sqrt{ (\bar{p}_1 s+\bar{p}_2)x}-a)^2}{2}}
\end{aligned}
\end{equation}
where $Q(\cdot)$ is the Gaussian $Q$-function and $(b)$ holds by $Q(x)\approx \frac{1}{2} e^{-\frac{x^2}{2}}$ for $x\ge0$~\cite{bessbook}. Hence, $\check{T}$ becomes
\begin{equation}
\label{tqtq}
\begin{aligned}
\check{T}\approx& B x^{\frac{1}{4}}\log_2 (1+x)\\
&\int_0^\infty \sqrt{\frac{\sqrt{\bar{p}_1 s+\bar{p}_2}}{a}} Q\left(\sqrt{ (\bar{p}_1 s+\bar{p}_2)x}-a\right)\bar{g}(s)ds.
\end{aligned}
\end{equation}
Now define:
\begin{equation}
\label{i1i2}
\begin{cases}
W_1(x) \triangleq &\int_0^\infty \frac{1}{2} \sqrt{ \frac{ \sqrt{ \bar{p}_1 s + \bar{p}_2 } }{a} } e^{ - \frac{ ( \sqrt{(\bar{p}_1 s + \bar{p}_2)x} - a )^2 }{2} } \bar{g}(s) ds, \\
W_2(x) \triangleq &\frac{1}{\sqrt{2\pi}} \frac{1}{2}\int_0^\infty \sqrt{ \frac{ \sqrt{ \bar{p}_1 s + \bar{p}_2 } }{a} }\sqrt{ \bar{p}_1 s + \bar{p}_2 }\\
&~~~~~~~~~~~~~ e^{ - \frac{ ( \sqrt{(\bar{p}_1 s + \bar{p}_2)x} - a )^2 }{2}} \bar{g}(s)ds.
\end{cases}
\end{equation}
Then the derivative of $\check{T}$ is approximately:
\begin{equation}
\label{chedapp}
\begin{aligned}
\frac{\partial \check{T}}{\partial x} \approx & B \Biggl\{ \left( \frac{x^{\frac{1}{4}}}{(1 + x)\ln 2} + \frac{1}{4} x^{-\frac{3}{4}} \log_2(1 + x) \right) W_1(x) \\
& ~~~~- x^{-\frac{1}{4}} \log_2(1 + x)   W_2(x) \Biggr\}.
\end{aligned}
\end{equation}
Note that both $W_1(x)$ and $W_2(x)$ share $\Psi$:
\begin{equation}
\label{defpsi}
\Psi(s, x) \triangleq  \frac{1}{2} \sqrt{ \frac{ \sqrt{ \bar{p}_1 s + \bar{p}_2 } }{a} }  e^{ - \frac{ ( \sqrt{(\bar{p}_1 s + \bar{p}_2)x} - a )^2 }{2}}\bar{g}(s).
\end{equation}
Then, the integrals in~(\ref{i1i2}) can be rewritten as:
\begin{equation}
\label{i1i2two}
\begin{cases}
W_1(x) = \int_0^\infty \Psi(s, x)ds, \\
W_2(x) = \frac{1}{\sqrt{2\pi }}  \int_0^\infty \sqrt{\bar{p}_1 s + \bar{p}_2} \Psi(s, x)ds,
\end{cases}
\end{equation}
which implies that their ratio becomes a weighted mean:
\begin{equation}
\label{iiwm}
\sqrt{2\pi} \frac{W_2(x)}{W_1(x)} =  \mathbb{E}_{\Psi_p} [\sqrt{\bar{p}_1 s + \bar{p}_2}]\triangleq\Omega,
\end{equation}
where $\Psi_p\triangleq \frac{\Psi(s, x)}{\int_0^\infty \Psi(s,x)ds}$ is a normalized PDF of $\Psi(s,x)$ with respect to $s$. 
Thereafter, 
we can approximate~(\ref{chedapp}) by:
\begin{equation}
\label{deri2}\
\frac{\partial \check{T}}{\partial x} \approx B  W_1(x) D_{\Omega}(x),
\end{equation}
where $D_{\Omega}(x)\triangleq \frac{x^{\frac{1}{4}}}{(1 + x)\ln 2} + \frac{1}{4} x^{-\frac{3}{4}} \log_2(1 + x) - \frac{\Omega}{\sqrt{2\pi} x^{\frac{1}{4}}} \log_2(1 + x).$
Since $B W_1(x) > 0$ for all $x > 0$, the sign of~(\ref{deri2}) is governed by $D_{\Omega} (x)$. It is clear that
\begin{enumerate}
\item $x\ll 1$: $D_{\Omega} (x) \sim \frac{5}{4\ln 2} x^{\frac{1}{4}} (\approx 0^+)$
\item $x\gg 1$: $D_{\Omega} (x) \sim -\frac{\Omega}{\sqrt{2\pi} x^{\frac{1}{4}}} \log_2 x\rightarrow 0^-$
\end{enumerate}
In addition, from Fig.~\ref{xphi}(b), $D_{\Omega}$ crosses zero exactly once by $0^+\rightarrow 0^-$, which suggests the existence of a unique maximizer $x_\omega$ such that $D_{\Omega}(x_\omega) = 0$ and leads to $\frac{\partial \check{T}}{\partial x}|_{x=x_\Omega} = 0$. Note that $x_\omega$ can be easily found by Newton's Method~\cite{boyd}. This implies the followings:
\begin{enumerate}
\item $\Lambda_1 < x_\omega$: The global maximizer is uniquely determined, which is $x_\omega$, confirming the quasiconcavity of $\check{T}$ within the specified range~\cite{boyd}. 
\item $\Lambda_1 \ge x_\omega$: $D_{\Omega}(x) < 0~(\forall x \ge \Lambda_1)$, indicating that $\check{T}$ is a monotonically decreasing function in this region.
\end{enumerate} 
This concludes the proof and the theorem follows.~$\blacksquare$

		\bibliographystyle{IEEEtran}
		\bibliography{IEEEexample}	
\end{document}